\newtheorem{definition}{Definition}
\newtheorem{theorem}{Theorem}
\newtheorem{lemma}{Lemma}
\begin{document}

\preprint{APS/123-QED}

\title{Excited state fluid mechanics and mathematical principles of separation and transition}

\author{Peng Yue}
\email[Emial: ]{pengyu.yue@outlook.com}
\affiliation{University of Electronic Science and technology of China, Chengdu, China}
\affiliation{China Aerodynamics Research and Development Center, Mianyang, China}
\author{Jinping Xiao}
\affiliation{University of Electronic Science and technology of China, Chengdu, China}
\affiliation{China Aerodynamics Research and Development Center, Mianyang, China}
\author{Ke Xu}
\affiliation{University of Electronic Science and technology of China, Chengdu, China}
\author{Ming Li}
\affiliation{China Aerodynamics Research and Development Center, Mianyang, China}
\author{Feng Jiang}
\affiliation{Northwestern Polytechnical University, Xi'an, China}
\author{Yiyu Lu}
\affiliation{University of Electronic Science and technology of China, Chengdu, China}
\author{Dewei Peng}
\affiliation{China Aerodynamics Research and Development Center, Mianyang, China}



\begin{abstract}
Transition and separation are difficult but important problems in the field of fluid mechanics. Hitherto, separation and transition problems have not been described accurately in mathematical terms, leading to design errors and prediction problems in fluid machine engineering. The nonlinear uncertainty involved in separation and transition makes it difficult to accurately analyze these phenomena using experimental methods. Thus, new ideas and methods are required for the mathematical prediction of fluid separation and transition. In this article, after an axiomatic treatment of fluid mechanics, the concept of an excited state is derived by generating a fluctuation velocity, and it is revealed that fluid separation and transition are special forms of this excited state. This allows us to clarify the state conditions of fluid separation and transition. Mathematical analysis of the Navier--Stokes equations leads to a general excited state theorem suitable for flowfields. Finally, the conditions of separation and transition are derived, and the corresponding  general laws are established. The results presented in this article provide a foundation for future research on the mechanism of turbulence and the solution of engineering problems.
\end{abstract}
\keywords{  }
\maketitle


\section{\label{sec:level1}Intrdoction}

Over the past 100 years, there have been considerable development in fluid mechanics. Advances in aeronautical engineering, ocean engineering, petroleum engineering, and other disciplines all depend on the continuous innovation of fluid mechanics engineering investigations. In recent years, the rapid rise of advanced experimental technology and computational fluid dynamics (CFD) have clarified the state of flowfield motion through digital methods, and have revealed more meaningful scientific problems, such as separation and transition. \par
The earliest study of separation and transition phenomena began in 1883. Renault discovered two different fluid forms through experiments on circular pipe flow, and defined the relevant dimensionless parameters.\cite{ Hall1981, Viviand1987} However, there is still no general method to determine the mechanism of the two flow states. Unsteady flow separation in rotationally augmented flowfields plays a significant role in a variety of fundamental flows.\cite{Melius2018} The aerodynamic performance of lifting surfaces operating under low-Reynolds-number conditions is impaired by separation,\cite{Melius2016} and the transition from laminar flow to turbulent flow is of great practical interest.\cite{Yang2019} To solve these problems, scholars have conducted a range of meaningful experimental and numerical studies. In addition, the separation induced by the shock wave/boundary layer interaction is obvious. As a common phenomenon, a large number of scholars have performed the reaserches in recent years. Huang et al.\cite{Huang2020,Yan2020,Du2021} studied and pointed out that micro vortex generators have been widely employed in the internal and external flowfields to suppress the separation caused by the interaction between shock wave and boundary layer, and to develop quantitative measurement and evaluation methods. As one of the most significant researches in recent years, these work not only systematically summarize and expound the major breakthroughs in related fields, which plays an important role in the further development of precision fluid machinery, but also describe the geometric and physical characteristics of supersonic/hypersonic flow and application of flow control techniques.\par
Experimental methods are relatively accurate means of describing a physical process. Recently, separation and transition have been widely investigated through experiments. Croci et al.\cite{Croci2019} revealed the emergence of two large laminar boundary layer separations downstream of the Venturi throat, in addition to low pressure zones that may induce both degassing or cavitation features. Through the use of time-resolved particle image velocimetry, Melius et al.\cite{Melius2018} examined vorticity accumulation and vortex shedding during unsteady separation over a three-dimensional airfoil. Chandra et al.\cite{Chandra2019} investigated the laminar--turbulent transition in the flow of Newtonian and viscoelastic fluids in soft-walled microtubes of diameter 400 $\mu m$ using micro-particle image velocimetry, and found that the fluid and wall elasticities are combined to trigger a transition at Reynolds numbers as low as 100 in the flow of polymer solutions through deformable tubes. Miro et al.\cite{Miro2019} quantified the influence of transition, diffusion, collision, equilibrium, and chemical-kinetics modeling on the stability characteristics of a flow, and estimated the transition-onset location of canonical boundary layers. They found that the boundary-layer height calculation is paramount to the simulation of the development of second-mode instabilities. Istvan et al.\cite{Istvan2018} experimentally studied the effects of the freestream turbulence intensity on the mean topology and transition characteristics of laminar separation bubbles forming over the suction side of a NACA 0018 airfoil. Wei et al.\cite{Wei2019} investigated the boundary layer transition and separation on an oscillating S809 airfoil using pressure signatures captured in wind tunnel tests. This provided a better understanding of the unsteady aerodynamic characteristics of the airfoil, which shows important roles in wind turbine blade design. Grossman et al.\cite{Grossman2018} observed the resulting shock wave/boundary layer interaction at the tunnel wall where an oblique shock wave is generated in a Mach-2 flow at a deflection angle of 12$^{\circ}$. Kuhnen et al.\cite{Kuhnen2018} found that the turbulent strength quickly decreases when a rotor is used in a circular tube to enhance the turbulence level, with severe turbulence flowing through the downstream rotor. They also observed that the turbulent strength quickly decreases in a laminar flow. These experiments showed that nonlinear effect can be counteracted by other nonlinear effect.
Although experimental methods can describe the physical process accurately, subtle differences between the experimental environment and the actual environment may affect the nonlinear initial values and introduce significant errors into the final results. In addition, the high economic cost of experimental apparatus makes such researches difficult. For these reasons, the development and promotion of theories are restricted. \par
Numerical methods aim to simulate the physical scene through advanced computing equipments, thereby recreating the state and dynamical properties of the fluid flow. This can greatly reduce the research cost and allow the ideal physical models to be analyzed in detail. As the research results are intuitive and vivid, they are of increasing interest to scholars. The same goes for transition and separation issues. Sengupta et al.\cite{Sengupta2020} solved the Navier--Stokes equations to investigate the individual and cumulative effect of forced frequency oscillations and freestream turbulence on the separation-induced transition caused by an adverse pressure gradient on a flat plate geometry. Salimipour\cite{Salimipour2019} studied a two-dimensional numerical simulation of the incompressible transitional flow around the NACA 0012 and Eppler 387 airfoils. This study reported better predictions of the separation bubbles than previous results, especially for long separation bubbles. Although the Reynolds time-averaged equation is widely used in engineering, it is not conducive to the study of physical laws because of its relatively poor accuracy. Therefore, large eddy simulation (LES) and direct numerical simulation (DNS) have been widely developed. Ni et al.\cite{Ni2019} studied turbulent boundary-layer separation from a backward-facing rounded ramp with active wall actuation control through an implicit LES approach, while Hosseinverdi et al.\cite{Hosseinverdi2018} carried out highly resolved DNS in which very-low-amplitude isotropic freestream turbulence fluctuations were introduced at the inflow boundary of the computational domain. Jiang et al.\cite{Jiang2020} also used a DNS method to examine the flow separation around a square cylinder for Reynolds numbers from 10-400. \par
In the field of CFD, the biggest disadvantage of LES and DNS is their high computational cost. Duraisamy et al.\cite{Duraisamy2018} exploited foundational knowledge in turbulence modeling and physical constraints to derive useful predictive models using data-driven approaches. To achieve the goals of engineering computing more quickly, a large number of verifications and improvements have been made to models and methods. Hussin et al.\cite{Hussin2018} discretized the compressible Navier--Stokes equations using a finite volume method and solved them with a semi-implicit pressure linking algorithm on unstructured grids. Wang et al.\cite{Wang2020} found that the shear-stress transport (SST)-$\gamma$ model captures an laminar separation bubble near the leading edge of the airfoil and shows significant advantages over traditional ``fully turbulent'' models for the prediction of static stall. Bernardos et al.\cite{Bernardos2019} developed algebraic transitional extensions for the accurate computation of laminar separation bubbles with $k-\omega$ models. Xu et al.\cite{Xu2019} proposed a two-equation transport transition model based on the analysis of linear stability theory, and built a CFD-compatible transition model. Mishra et al.\cite{Mishra2019} found that the nonlinear $k-\omega$ SST and $k-k_L-\omega$ transition models provide comparable predictions of the lift and drag coefficients. Bernardos et al.\cite{Bernardos2019} investigated the performance of a recently developed laminar separation transition triggering approach that controls the transitional turbulence source terms. Hu et al.\cite{Hu2019} studied laminar juncture flow numerically to investigate the outermost saddle point of the SS/SA and the topological transition.\par
These studies have played important roles in promoting the development of fluid mechanics. From their results, we can summarize some important and generally accepted views. Separation and transition are closely related to the pressure gradient. Separation refers to reflux, and transition refers to an irregular flow. From the perspective of phenomenological physics, the randomness and uncertainty of nonlinear effect make it difficult to summarize the general laws of experimental and numerical methods, so fundamentally solving the separation and transition problems is challenging. This has had a negative impact on further innovation and optimization in the most important engineering fields, such as flow control,\cite{Shahrabi2019, Li2019} microflow control,\cite{Beebe2015} aircraft design under extreme conditions,\cite{Kwiek2019} low-altitude flight stability,\cite{Tang2019, Zhou2019} superfluids,\cite{Johnstone2019} fluid materials,\cite{Huang2019} flight test applications,\cite{Wang2019} and innovative design.\cite{Kwiek2019} An universal law of separation and transition would enable significant innovations in other engineering fields. Thus, its importance cannot be ignored. \par
In this article, separation and transition are studied through rational mechanics methods with the aim of producing better predictions on the separation and transition positions. According to Euler's description, the velocity of a fluid at a certain point in space can be considered as the instantaneous velocity, and the instantaneous velocity minus the averaged velocity is the fluctuation velocity. The concept of the excited state is defined by the generation of fluctuation velocity. Transition and separation are defined through the phenomenological characteristics of this excited state and the fluctuation velocity. Using the idea of superposition states in quantum mechanics, the generation of the excited state is then analyzed mathematically. Finally, the necessary and sufficient conditions for transition and separation are studied, and the results are compared with experimental counterparts for the separation and transition in the upper part of an airfoil. The predictions are found to be accurate to within $2\%$.

\section{\label{sec:level2}General mathematical model of flowfield}
In the field of fluid mechanics, the Navier--Stokes equations describe the flowfield. Assuming that the fluid domain is $\Omega \subseteq \mathbb{R}^3$ and the velocity vector field is in the differential manifold constructed by $M=\left(\Omega,\mathscr{S}\right)$, whose cotangent vector field represents its velocity field $u_i \in \mathscr{T}_M$. The $m$-th submanifold in $M$ is $M_m$. As the bounded space $\Omega$ is homeomorphic with the smooth submanifold $M_m$ with edges \cite{Teleman2019}, we could find a mapping $\varphi_m$ satisfying the following conditions:
\begin{equation} 
\label{1}
\varphi_m: x_j \mapsto u_i, \ x_j \in \Omega, \ u_i \in \mathscr{T}_{M_m}.
\end{equation} 

In this way, a complex flow can be represented by embedding different submanifolds and different states can be considered as different submanifolds. In this article, to simplify the mathematical expressions and clarify the physical meaning, the mathematical laws of submanifolds are not discussed; they will be discussed in subsequent studies. Therefore, for any velocity fields of isotropic fluid medium, the spatial and temporal evolution of the physical field should satisfy the Navier--Stokes equations \cite{Buckmaster2019}.
\begin{lemma}[Navier--Stokes equations] For $\Omega \subseteq \mathbb{R}^3$, $u_i=\varphi_m\left(x_j\right) \in \mathscr{T}_M$, and $x_j \in \Omega$, the velocity field $u_i$ satisfies the continuity equation
\begin{equation}
\label{2}
\frac{\partial \rho}{\partial t}+\left(\rho u_{i}^{\bullet}\right)^{, i}=0 
\end{equation}
and the momentum equation for a viscous medium
\begin{equation}
\label{3}
\rho \frac{\partial u_{i}}{\partial t}+\rho u_j u_{i,\bullet}^{\bullet, j}=\sigma _{ij,\bullet}^{\bullet\bullet,j}+ f_i,
\end{equation}
where $\rho$ is the fluid density, and $f_i$ is the body force. $\sigma _{ij}$ is the stress state tensor for isomorphic Newtonian fluids, which is expressed as follows:
\begin{equation}
\label{4}
\sigma _{ij}=-p\delta_{ij}+\psi \mu u_{k,\bullet}^{\bullet,k}\delta_{ij}+\mu\left(u_{j,i}+u_{i,j}\right),
\end{equation}
here $p$ is the pressure, $\delta _{ij}$ is the Kronecker symbol, and $\mu$ is coefficient of dynamic viscosity.
\end{lemma}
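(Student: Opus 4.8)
The statement bundles together the two balance laws of continuum mechanics and the Newtonian closure, so the plan is to derive each ingredient from first principles on the manifold $M$ and then specialize the stress tensor. First I would fix an arbitrary material subdomain $\omega \subseteq \Omega$ advected by the velocity field $u_i=\varphi_m(x_j)$ and express conservation of mass as $\frac{d}{dt}\int_{\omega}\rho\, dV = 0$. Applying the Reynolds transport theorem together with the divergence theorem on $M_m$ rewrites this as $\int_{\omega}\bigl(\partial_t\rho + (\rho u_i^{\bullet})^{,i}\bigr)\, dV = 0$; since $\omega$ is arbitrary and the integrand is continuous, a standard localization argument yields the continuity equation \eqref{2}.

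For the momentum balance I would apply Newton's second law to the same material volume, $\frac{d}{dt}\int_{\omega}\rho u_i\, dV = \int_{\partial\omega} t_i\, dA + \int_{\omega} f_i\, dV$, where $t_i$ is the surface traction. The key structural step is Cauchy's stress theorem, which forces $t_i$ to be linear in the outward conormal, $t_i = \sigma_{ij} n^j$, thereby defining the stress tensor; the symmetry $\sigma_{ij}=\sigma_{ji}$ then follows from the balance of angular momentum. Converting the boundary term with the divergence theorem, and using the continuity equation to reduce the left-hand side to $\int_{\omega}\rho\bigl(\partial_t u_i + u_j u_{i,\bullet}^{\bullet,j}\bigr)\, dV$, gives \eqref{3} after a second localization.

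It remains to justify the constitutive law \eqref{4}. Here I would invoke the Newtonian hypothesis: $\sigma_{ij} + p\delta_{ij}$ is a linear, isotropic, frame-indifferent function of the velocity gradient that vanishes when the fluid is at rest. The representation theorem for isotropic linear tensor functions then reduces this to $\sigma_{ij} = -p\delta_{ij} + \lambda\, u_{k,\bullet}^{\bullet,k}\delta_{ij} + \mu\bigl(u_{i,j}+u_{j,i}\bigr)$, and setting $\lambda = \psi\mu$ recovers \eqref{4}; only the symmetry of the rate-of-strain tensor is needed for the split into isotropic (pressure plus dilatational) and deviatoric parts.

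The main obstacle is not the physical content, which is classical, but the differential-geometric bookkeeping demanded by the setup: the transport theorem, the divergence theorem, Cauchy's tetrahedron argument, and the representation theorem all have to be phrased covariantly on $M_m$, with the comma read as the covariant derivative and $\bullet$ as metric raising and lowering, and one must check that the manifold is regular enough — smooth with boundary, as guaranteed by the homeomorphism with $\Omega$ — for the localization steps to be valid. I would state these analytic regularity hypotheses explicitly rather than re-derive the manifold calculus, in keeping with the paper's stated choice to postpone the laws of submanifolds to later work.
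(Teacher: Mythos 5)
Your derivation is correct as classical continuum mechanics, but it is worth noting that the paper does not prove this lemma at all: it is asserted as a physical postulate --- ``for any velocity fields of isotropic fluid medium, the spatial and temporal evolution of the physical field should satisfy the Navier--Stokes equations'' --- backed only by a literature citation, consistent with the paper's announced ``axiomatic treatment of fluid mechanics.'' What you have supplied is the standard textbook chain (Reynolds transport theorem and localization for Eq.~(\ref{2}); Cauchy's tetrahedron argument, the divergence theorem, and a second localization for Eq.~(\ref{3}); the linear isotropic representation theorem with $\lambda=\psi\mu$ for Eq.~(\ref{4})), and each step is sound. The trade-off is that your route does not eliminate axioms so much as relocate them: conservation of mass, balance of linear and angular momentum, Cauchy's postulate that the traction depends only on the conormal, and the Newtonian frame-indifferent closure are themselves physical assumptions of exactly the kind the paper chooses to bundle into the lemma itself. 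Your version buys transparency about which hypotheses produce which terms (in particular, that symmetry of $\sigma_{ij}$ comes from angular momentum and that the $\psi\mu u_{k,\bullet}^{\bullet,k}\delta_{ij}$ term is the dilatational part of the isotropic representation), at the cost of having to carry out the transport and divergence theorems covariantly on $M_m$ --- a burden you correctly identify and which the paper explicitly defers. Since the paper's subsequent results only ever \emph{use} Eqs.~(\ref{2})--(\ref{4}) and never their derivation, either treatment is adequate for what follows.
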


In the construction process of the Reynolds equation,  the fluctuation velocity  \cite{Yan2019} is introduced to reflect the randomness and uncertainty of turbulence. Similarly, it is also used here to reveal the mechanism of separation and transition phenomenologically.

\begin{definition}[Fluctuation velocity]
Let $\Omega \in \mathbb{R}^3$ be the flowfield space, and the velocity vector field is in a differential manifold constructed by $M=\left(\Omega,\mathscr{S}\right)$, whose cotangent vector field represents its velocity field $u_i \in \mathscr{T}_M$. The velocity field can be expressed as:
\begin{equation}
\label{5}
u_i={u'}_i+\bar{u}_i,
\end{equation}
where ${u'}_i \in \mathscr{T}_M$ is the fluctuation velocity field and $\bar {u}_i \in \mathscr{T}_M$ is the average velocity field.
\end{definition}

Note that the fluctuation velocity may introduce nonlinearity and bifurcations. As we define and embed different submanifolds, they should still satisfy the mapping conditions of Eq.~(\ref{1}). This is an important problem, and will be the focus of future research.\par

If the above relationship is combined with the Navier--Stokes equations, the Reynolds equation \cite{Du2019} can be obtained. In fluid dynamics, turbulence is a chaotic state of motion \cite{Graham2021}. Although difficult to explain, it is a very common phenomenon, causing many problems in practical engineering applications.\par
Note that the fluctuation velocity defined here is not necessarily irregular and disordered. Apparently, this definition is only a simple superposition principle of vector fields, but we can try to define transition and separation within this framework.\par
The key features of transition \cite{Majumdar2020} are randomness and irregularity, so the fluctuation velocity induced by different points should be defined differently. The transition position is a set of some points, at which the laminar flow converts into turbulent flow, and where the flow moves from being stable to unstable. From the perspective of phenomenological physics, transition refers to the generation of irregular and random flow mechanisms. According to the second law of thermodynamics, the system spontaneously moves towards chaos. Therefore, it is especially significant to define the first point or region where this irregular change occurs. The first point that makes the fluid chaotic must have the same velocity as other points, but its fluctuation velocity will be different.\par

\begin{definition}[Transition]
Let $U\left(x_h,\epsilon\right)$ be a neighborhood of point $x_h \in \Omega$. A laminar fluid with velocity $u_i \in \mathscr{T}_{M}$ flows through the neighborhood, and excites the fluctuation velocity ${u'}_j \in \mathscr{T}_M$ through various disturbances. Let $\forall x_k \in U\left(x_h,\epsilon\right)$, $\exists x_s \in U\left(x_h,\epsilon\right)$, and $x_s \neq x_k$. If they satisfy
\begin{equation}
\label{6}
u_i\left(x_k\right)=u_i\left(x_s\right),\ {u'}_j\left(x_k\right) \neq {u'}_j\left(x_s\right), \ {u'}_j\left(x_h\right)=0,
\end{equation}
then there is a transition in the flowfield at point $x_h$. The point $x_h$ is called the transition point, and the set of transition points is called the transition position. 
\end{definition}

Separation is another key physical feature in fluid mechanics, representing a huge loss of flow energy \cite{Williams2021}. It cannot be ignored in the process of fluid machinery design. For separation, it is not important whether the fluctuation velocity is regular or not. In phenomenological physics, it is more important to determine whether the fluctuation velocity in the neighborhood is opposite to the original velocity distribution or not near the wall. Reason for the separation phenomenon is backflow, that is, a reverse velocity is equal to or greater than the original one. Thus, we define separation as follows.
\begin{definition}[Separation]
Let $U\left(x_h,\epsilon\right)$ be a neighborhood of point $x_h \in \Omega$. A fluid with velocity $u_i \in \mathscr{T}_{M}$ flows through the neighborhood, and excites the regular distribution of fluctuation velocity ${u'}_j \in \mathscr{T}_M$ through various disturbances. If $\exists x_k \in U \left(x_h,\epsilon\right)$ such that
\begin{equation}
\label{7}
u_i\left(x_k\right){u'}_j\left(x_k\right) \leq 0, \ \left|u_i\left(x_k\right)\right|\leq \left|{u'}_j \left(x_k\right)\right|,
\end{equation}
$$\ {u'}_j\left(x_h\right) = 0,$$
then there is a separation in the flowfield at point $x_h$. The point $x_h$ is named the separation point, and the set of these points is the separation position. If the distribution of fluctuation velocity is irregular, the irregular distribution of fluctuation velocity will make the fluid attach to the surface, which is called separation effect, but not separation. 
\end{definition}

We now find that the emergence of the fluctuation velocity is one of the key factors in solving the transition and separation problems. The fluctuation velocity is a physical variable that cannot be described directly, so we first study the characteristics of the fluctuation velocity and related physical variables at the transition and separation positions.\par

\section{\label{sec:level3}Mathematical and physical conditions for generation of fluctuation velocity}

To accurately depict the flowfield variation characteristics from the mathematical and physical perspectives, it is assumed that a new fluctuation velocity field is generated in the flowfield. Before the fluctuation velocity is generated, the value is zero. At the next moment, it is not equal to zero, which means that the derivative of the fluctuation velocity with respect to time is not equal to zero in the generation process of fluctuation velocity field. Therefore, we could draw the following conclusions.

\begin{theorem}[Conditions for generating fluctuation velocity]
Let point ${x_k} \in \Omega$ have a neighborhood $U\left(x_k,\varepsilon\right)$. For $x_j \in \{x_i\vert x_i \in U\left(x_k,\varepsilon\right), \left| x_k -x_i \right| <\varepsilon, \forall\varepsilon>0 \}$, the conditions of generating a fluctuation velocity can be expressed as
\begin{equation}
\label{8}
{u'}_j=0,
\end{equation}
and the temporal derivative satisfies as follows:
\begin{equation}
\label{9}
\frac{d{u}'_j}{dt}=\frac{\partial {u}'_j}{\partial t}+ {u}_k {u'}^{\bullet, k}_{j,\bullet} \neq 0.
\end{equation}
As $\varepsilon \rightarrow 0$, $x_k$ gives the coordinates of key points such as the transition and separation points. 
\end{theorem}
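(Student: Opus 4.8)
The plan is to argue directly from the physical picture introduced just before the statement: a fluctuation velocity field is \emph{created} at $x_k$, in the sense that it vanishes identically on the neighborhood up to some instant $t_0$ and is not identically zero on any neighborhood of $x_k$ immediately afterwards. With this reading, condition~(\ref{8}) is merely the record of the pre-excitation state ${u'}_j=0$ on the contracting neighborhood and requires no argument beyond fixing notation; the substance of the theorem lies in Eq.~(\ref{9}) and in the identification of $x_k$, as $\varepsilon\to0$, with the transition and separation points.

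For Eq.~(\ref{9}) I would reason along fluid trajectories and by contradiction. Let $X(t)$ solve $\dot X^k=u_k\big(X(t),t\big)$ with $X(t_0)=x_k$; by the chain rule $\frac{d}{dt}{u'}_j\big(X(t),t\big)=\frac{\partial{u'}_j}{\partial t}+u_k\,{u'}^{\bullet,k}_{j,\bullet}$, which is precisely the material derivative written in the statement. If this derivative vanished on a full space--time neighborhood of $(x_k,t_0)$, then ${u'}_j$ would be constant along every trajectory meeting that neighborhood and hence, by Eq.~(\ref{8}), identically zero there, contradicting the generation hypothesis; therefore the material derivative fails to vanish in every neighborhood of $(x_k,t_0)$, which in the limit $\varepsilon\to0$ is the pointwise statement~(\ref{9}) at $x_k$. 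The smoothness supplied by the manifold structure of Eq.~(\ref{1}) and by the Navier--Stokes lemma is what licenses both the chain-rule expansion and the local reading of this conclusion.

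For the limit $\varepsilon\to0$ I would match the pair of conditions against the Definitions of Transition and Separation. As $U(x_k,\varepsilon)$ contracts to $x_k$, conditions~(\ref{8})--(\ref{9}) say exactly that ${u'}_j(x_k)=0$ while the fluctuation field is actively changing at $x_k$. The Definition of Transition isolated a point $x_h$ with ${u'}_j(x_h)=0$ near which ${u'}_j$ takes distinct nonzero values at points of equal mean velocity, and the Definition of Separation isolated a point $x_h$ with ${u'}_j(x_h)=0$ near which ${u'}_j$ reverses and dominates the base flow; in each case $x_h$ is a point where the fluctuation field vanishes but is being generated, hence satisfies the hypotheses of the theorem, so the key points produced by the theorem in the limit are precisely such $x_h$.

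The principal obstacle is definitional rather than computational: one must pin down ``generation of a fluctuation velocity'' tightly enough that (\ref{8})--(\ref{9}) are genuinely \emph{the} conditions --- necessary and sufficient --- and not merely necessary consequences, and one must check that the neighborhood bookkeeping used here is compatible with the neighborhoods $U(x_h,\epsilon)$ in the transition and separation definitions. I would resolve this by \emph{taking} ``generation at $x_k$'' to mean exactly ``${u'}_j(x_k)=0$ with nonvanishing material derivative there,'' so that the theorem reduces to the decomposition of that material derivative into its local part $\partial{u'}_j/\partial t$ and its convective part $u_k{u'}^{\bullet,k}_{j,\bullet}$, together with the observation that this is the state of affairs at the transition and separation points.
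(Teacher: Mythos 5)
Your proposal matches the paper's own justification, which is not a formal proof at all but the single heuristic paragraph preceding the theorem: before generation the fluctuation velocity is zero, at the next moment it is not, hence its material derivative cannot vanish, and the paper then simply asserts that Eqs.~(8)--(9) are the necessary conditions at the separation and transition points. Your trajectory-based contradiction argument and your explicit decision to \emph{define} ``generation'' as vanishing value with nonvanishing material derivative are a faithful (and somewhat more careful) formalization of exactly that reasoning, so the approach is essentially the same.
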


Eqs.~(\ref{8}) and (\ref{9}) could be regarded as the necessary conditions for separation and transition. When a fluid particle is at the key position, although the fluctuation velocity has not been generated, the fluctuation acceleration exists and induces the appearance of fluctuation velocity at the next moment. At this time, the flowfield begins to separate or transite.\par

Although we have presented particularly important conditions for the occurrence (and even the end) of the fluctuation velocity as a random and uncertain physical variable, it is difficult to evaluate the law governing its development. Therefore, we introduce the superposition principle of quantum mechanics to obtain the general dynamic law of fluctuation velocity in fluid mechanics.\par

\section{\label{sec:level4}Spatial and temporal evolution law of excited and unexcited states}
In this section, we introduce the concept of the superposition state (Fig.~\ref{p1}) in quantum mechanics \cite{Chiribella_2020}. We assume that any uncertain state in an event exists in different parallel spaces. These parallel spaces will eventually collapse into a single state for some reason (according to current researches in the field of physics, this reason can be regarded as subjective in the field of quantum mechanics, but it is generally regarded as objective in the field of classical mechanics). By using the idea of the superposition state, the reason for the existence of separation and transition can be determined. Firstly, the spatial and temporal evolution law of fluctuation velocity is studied.\par

\begin{figure}[ht]
\includegraphics[scale=0.35,angle=0]{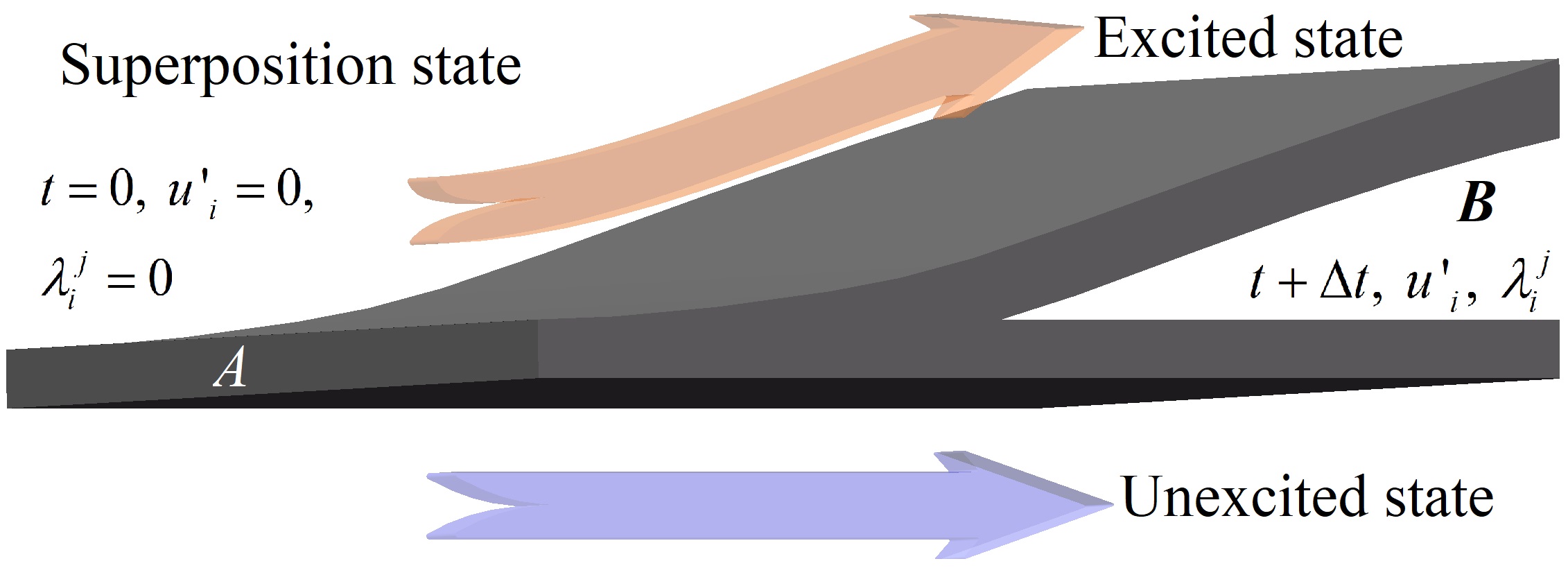}
\caption{\label{p1}Schematic diagram of superposition state ($A$ is the initial position of the superposition state; $B$ is the final position leaving the superposition state.)}
\end{figure}
As mentioned above, the fluctuation velocity and velocity strain tensor are equal to 0 (Fig.~\ref{p1}). Next, detailed mathematical analysis at position $A$ is developed. Here, it should be noted that when $\Delta t \rightarrow 0$ occurs, positions $B$ and $A$ are infinitely close to each other, and position $B$ should also satisfy the condition of position $A$ instantaneously.

Considering that the generation of the fluctuation velocity is relatively difficult to be judged, we need to define unexcited and excited states to describe various states simply.
\begin{definition}[Excited and unexcited states]
Excited state is flowfield state, when the fluctuation velocity has been generated in the flowfield; conversely, flowfield state is unexcited state.
\end{definition}
According to the sensitivity to initial conditions in the nonlinear dynamics, the flowfield will be into excited state under the necessary external physical factors. Therefore, the excitation can be considered as the cause of the change of flowfield. The reasons for flowfield excited state can be divided into two classes: active factors and passive factors. Active factors include various influences from the initial state of the fluid, such as visosity\cite{Cornelio2019}, velocity expansion\cite{Zhao2020} and so on, whereas passive factors cover the influence and interference of the geometry and physical conditions of the fluid. To date, it has not been determined which specific factors cause the excited state of the flowfield, but the basic situation of the flowfield has been discussed from the perspective of phenomenological physics in a large number of literatures, such as pressure boundary\cite{Han2021}, geometric shape\cite{Jo2021}, inflow velocity\cite{Debuysschere2021}, temperature\cite{Rauf2020} and so on. However, the temperature is generally the main factor leading to energy change, which is not the focus of this article.\par
The mathematical relationship between the excited state and the unexcited state is an important physical variable. To describe the process of an excited flowfield, we set up a new physical variable that can be used to evaluate the measurement tensor of the ratio of fluctuation velocity to average velocity.\par
\begin{definition}[Velocity strain tensor]
If the second-order tensor $\lambda ^{j}_i$ satisfies the following relationship
\begin{equation}
\label{10}
u'_i=\bar{u}_j\lambda ^{j}_i,\quad u'_i,\bar{u}_j \in \mathscr{T}_M,
\end{equation}
then $\lambda ^{j}_i$ is called the velocity strain tensor.
\end{definition}
Meanwhile, from the above definition and Eqs.~(\ref{8}) and (\ref{9}), the following theory is obtained.
\begin{theorem}[Conditions of excited state]
At the initial point of the excited state position, the velocity strain tensor $\lambda ^{j}_i$ satisfies the equation
\begin{equation}
\label{18}
\lambda^{j}_i=0,
\end{equation}
the time derivative of the velocity strain tensor is expressed as
\begin{equation}
\label{19}
\frac{\mathrm{d}\lambda ^{j}_i}{\mathrm{d}t}=\frac{\partial \lambda ^{j}_i}{\partial t}+u_k\lambda ^{j,k}_{i,\bullet}\neq 0,
\end{equation}
where $u_k$ is the velocity field of the flow.
\end{theorem}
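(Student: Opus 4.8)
The plan is to derive Theorem~3 directly from Theorem~1 together with Definition~6, viewing the statement essentially as the image of the fluctuation-velocity conditions under the substitution ${u'}_i \mapsto \bar u_j \lambda^j_i$. For Eq.~(\ref{18}): by Theorem~1 the fluctuation velocity vanishes at the key point $x_k$ (in the limit $\varepsilon \to 0$), ${u'}_i = 0$, and inserting this into the defining relation ${u'}_i = \bar u_j \lambda^j_i$ of Definition~6 gives $\bar u_j \lambda^j_i = 0$. Assuming the mean flow is nonzero and finite there, $0 < \left|\bar u_j\right| < \infty$, this forces $\lambda^j_i = 0$ at the initial point of the excited-state position (up to the normalization issue discussed below).

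For Eq.~(\ref{19}) I would differentiate the identity ${u'}_i = \bar u_j \lambda^j_i$ along a fluid trajectory. Since the material derivative $\mathrm{d}/\mathrm{d}t$, covariant along the flow with velocity $u_k$, is a derivation, the Leibniz rule yields
\begin{equation}
\label{p-leibniz}
\frac{\mathrm{d}{u'}_i}{\mathrm{d}t}=\frac{\mathrm{d}\bar u_j}{\mathrm{d}t}\,\lambda^j_i+\bar u_j\,\frac{\mathrm{d}\lambda^j_i}{\mathrm{d}t}.
\end{equation}
Evaluating at the key point, where $\lambda^j_i = 0$ by the previous step, the first term drops out, so $\frac{\mathrm{d}{u'}_i}{\mathrm{d}t} = \bar u_j\,\frac{\mathrm{d}\lambda^j_i}{\mathrm{d}t}$. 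Theorem~1 guarantees $\frac{\mathrm{d}{u'}_i}{\mathrm{d}t} \neq 0$ and $\bar u_j$ is finite, hence $\frac{\mathrm{d}\lambda^j_i}{\mathrm{d}t}$ cannot vanish identically; expanding the material derivative in Eulerian form, $\frac{\mathrm{d}\lambda^j_i}{\mathrm{d}t} = \frac{\partial \lambda^j_i}{\partial t} + u_k\lambda^{j,k}_{i,\bullet}$, gives exactly Eq.~(\ref{19}).

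The main obstacle is making the two substitutions rigorous rather than formal. The sharper point is that Definition~6 does not pin down $\lambda^j_i$ uniquely (nine components, three constraints), so the implication ${u'}_i = 0 \Rightarrow \lambda^j_i = 0$ needs either an additional normalization of $\lambda$ or a componentwise reading in a frame adapted to $\bar u$; I would therefore carry $0 < \left|\bar u_j\right| < \infty$ at the key point as a standing hypothesis, remarking that on a solid wall $\bar u_j$ vanishes and the transition or separation point must then be reached through the same $\varepsilon \to 0$ limit used in Theorem~1. One also needs $\frac{\mathrm{d}\bar u_j}{\mathrm{d}t}$ bounded so the discarded term in Eq.~(\ref{p-leibniz}) is genuinely negligible, which follows from the momentum equation~(\ref{3}) of Lemma~1 under the usual regularity assumptions on $p$, $\mu$ and $f_i$; the remaining verification, that the covariant derivative along the flow obeys the product rule on $M$, is routine.
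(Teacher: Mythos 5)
Your route is exactly the one the paper takes: the paper offers no explicit proof of this theorem, stating only that it is ``obtained'' from the definition of the velocity strain tensor together with the conditions for generating a fluctuation velocity, which is precisely the substitution ${u'}_i=\bar u_j\lambda^j_i$ followed by the Leibniz-rule differentiation that you carry out. The one point where you go beyond the paper --- observing that $\bar u_j\lambda^j_i=0$ only places $\bar u$ in the left kernel of $\lambda$ and does not by itself force $\lambda^j_i=0$, so that an extra normalization or a standing hypothesis $0<\left|\bar u_j\right|<\infty$ is needed --- is a genuine gap in the paper's assertion that your proposal correctly identifies and patches.
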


In the unexcited state, there is no fluctuation velocity in the flowfield. The velocity of the fluid particles is only related to the average velocity of the fluid. Therefore, the fluctuation velocity of the fluid ${u}'_i$ can be regarded as zero. So, Eq.~(\ref{5}) could be simplified as
\begin{equation}
\label{11}
u_i=\bar{u}_i.
\end{equation}

By introducing Eq.~(\ref{11}) into Eqs.~(\ref{2}) and (\ref{3}) to describe the flowfield motion, the unexcited state equations can be obtained as follows. For the continuity equation,
\begin{equation}
\label{12}
\frac{\partial \rho }{\partial t}+(\rho \bar{u}_{i}^{\bullet})^{, i}=0,
\end{equation}
and for the momentum equation under the action of viscosity, the temporal and spatial variation of the flowfield can be described as 
\begin{equation}
\label{13}
\rho \frac{\partial \bar{u}_i}{\partial t}+\rho \bar{u}_k\bar{u}_{i,\bullet}^{\bullet,k}=\sigma _{ik,\bullet}^{\bullet\bullet,k}+f_i.
\end{equation}

To make the expression clearer and avoid problems in later calculations, the following theorem could be established through Eqs.~(\ref{12}) and (\ref{13}). Considering that $\delta_i^{j}$ is an unit constant tensor, its derivative is a zero tensor.

\begin{theorem}[Unexcited state] For the unexcited state, the continuity equation is
\begin{equation}
\label{14}
\frac{\partial \rho }{\partial t}+\rho^{\bullet,i}\bar{u}_j\delta_i^{j}+\rho\bar{u}_{j,\bullet}^{\bullet,i}\delta_i^{j}=0, \quad \bar{u}_j \in \mathscr{T}_M,
\end{equation}
and the momentum equation for the viscous medium is
\begin{equation}
\label{15}
\rho \frac{\partial (\bar{u}_j\delta ^{j}_i)}{\partial t}+\rho \bar{u}_{j1}\bar{u}_{j2,\bullet}^{\bullet,k}\delta_k^{j1}\delta ^{j2}_i=g_i^{(1)},
\end{equation}
where $g_i^{(1)}$ is the unexcited state external force field, there is
$$g_i^{(1)}=\sigma ^{(1)\bullet \bullet,k}_{ik,\bullet}+ f_i,$$
here $\sigma^{(1)} _{ik}$ is the general stress state tensor in unexcited state, and $f_i$ is the body force.
\end{theorem}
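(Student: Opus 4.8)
The plan is to obtain Eqs.~(\ref{14}) and (\ref{15}) directly from the unexcited-state balance laws Eqs.~(\ref{12}) and (\ref{13}) by applying the Leibniz rule to the nonlinear fluxes and then relabelling indices with the Kronecker tensor. The only structural input beyond elementary calculus is that $\delta_i^{j}$ is a constant (covariantly constant) tensor on $M$, so its spatial and temporal derivatives vanish, as noted just before the statement; this is what licenses inserting or extracting $\delta_i^{j}$ inside any derivative operator without producing additional connection terms.

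First I would treat the continuity equation. Expanding the divergence of the mass flux in Eq.~(\ref{12}) by the product rule gives
\begin{equation*}
(\rho \bar{u}_{i}^{\bullet})^{, i}=\rho^{\bullet,i}\bar{u}_i+\rho\,\bar{u}_{i,\bullet}^{\bullet,i}.
\end{equation*}
Each bare factor $\bar{u}_i$ is then rewritten as $\bar{u}_j\delta_i^{j}$, which is a pure relabelling since the Kronecker delta merely restores the contracted index; because $\delta_i^{j}$ has zero derivative it may be moved outside the spatial derivative in the second term. Substituting the result back into Eq.~(\ref{12}) produces Eq.~(\ref{14}) with $\bar{u}_j \in \mathscr{T}_M$.

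Second I would treat the momentum equation Eq.~(\ref{13}) term by term. In the local-acceleration term, writing $\bar{u}_i=\bar{u}_j\delta_i^{j}$ and using $\frac{\partial \delta_i^{j}}{\partial t}=0$ gives $\rho\,\frac{\partial (\bar{u}_j\delta_i^{j})}{\partial t}$. In the convective term I assign independent dummy indices to the two velocity factors and reinsert the identity tensors, $\bar{u}_k\bar{u}_{i,\bullet}^{\bullet,k}=\bar{u}_{j1}\,\bar{u}_{j2,\bullet}^{\bullet,k}\,\delta_k^{j1}\delta_i^{j2}$, once more using constancy of $\delta_i^{j}$. For the right-hand side, one identifies $\sigma^{(1)}_{ik}$ with the Newtonian stress tensor of Eq.~(\ref{4}) evaluated on the mean field (i.e. under the substitution $u_i\mapsto\bar{u}_i$ dictated by Eq.~(\ref{11})) and sets $g_i^{(1)}=\sigma^{(1)\bullet\bullet,k}_{ik,\bullet}+f_i$, so that Eq.~(\ref{13}) becomes Eq.~(\ref{15}).

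I do not anticipate a genuine obstacle, since the theorem is a reformulation rather than a new estimate; the single point that must be handled with care is the commutation of $\delta_i^{j}$ with the derivative operators in Eqs.~(\ref{12}) and (\ref{13}), which should be justified explicitly by the covariant constancy of the identity tensor so that no connection or curvature contribution of the manifold $M=(\Omega,\mathscr{S})$ is silently discarded. A secondary bookkeeping point is to fix, once and for all, that the ``general stress state tensor in the unexcited state'' $\sigma^{(1)}_{ik}$ is precisely Eq.~(\ref{4}) with $u_i$ replaced by $\bar{u}_i$, so that the definition of $g_i^{(1)}$ is consistent with the right-hand side of Eq.~(\ref{3}).
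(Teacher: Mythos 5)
Your proposal is correct and follows the same route as the paper: the paper obtains Eqs.~(\ref{14}) and (\ref{15}) by substituting the unexcited condition $u_i=\bar{u}_i$ into the Navier--Stokes equations to get Eqs.~(\ref{12}) and (\ref{13}), then expanding the nonlinear fluxes by the product rule and inserting the Kronecker tensor, justified exactly by the remark that $\delta_i^{j}$ is a unit constant tensor whose derivative vanishes. Your explicit attention to the commutation of $\delta_i^{j}$ with the derivative operators and to identifying $\sigma^{(1)}_{ik}$ as Eq.~(\ref{4}) evaluated on the mean field simply makes precise what the paper leaves implicit.
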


In the excited state, when the flowfield gets the necessary conditions, the generated fluctuation velocity is the difference of the instantaneous velocity of the flowfield and the average velocity. Therefore, to calculate the instantaneous velocity of the flowfield at this time, the fluctuation velocity is replaced by the average velocity in Eq.~(\ref{10}), and then, substituted into Eq.~(\ref{5}). To obtain a simplified expression for the flowfield velocity, the excited ratio tensor is defined.
\begin{definition}[Excited ratio tensor]
If the second-order tensor $\xi^{j}_i$ satisfies the relationship
\begin{equation}
\label{16}
\xi^{j}_i=\delta ^{j}_i+\lambda ^{j}_i
\end{equation}
then, $\xi ^{j}_i$ is the excited ratio tensor.
\end{definition}

Excited ratio tensor could be utilized to evaluate the measurement tensor to describe the ratio of average velocity to instantaneous velocity. The following conclusion is drawn.
\begin{equation}
\label{17}
u_i=\bar{u}_j\delta ^{j}_i+\bar{u}_j\lambda ^{j}_i=\bar{u}_j(\delta ^{j}_i+\lambda ^{j}_i)=\bar{u}_j\xi ^{j}_i.
\end{equation}

To better compare the phenomena without the excited flowfield, Eqs. (\ref{18}), (\ref{19}), and (\ref{17}) are combined with Eqs.~(\ref{2}) and (\ref{3}), and then, the following theorem is obtained. In order to reduce the difficulties of mathematical expression, we set two adjustment tensor coefficients $e_{ki}^{j1j2}$ and $\alpha_{i}^{j1j2}$, there are 
$$e_{ki}^{j1j2}=\xi ^{j1}_k\xi ^{j2}_i, \ \alpha_{i}^{j1j2}=\xi ^{j1}_k\xi ^{j2,k}_{i,\bullet}.$$

\begin{theorem}[Excited state] For the excited state, the excited ratio tensor $\xi^{j}_i$ satisfies the continuity equation,
\begin{equation}
\label{20}
\frac{\partial \rho }{\partial t}+\rho^{\bullet,i} \bar{u}_j\xi^{j}_i+\rho \bar{u}_{j,\bullet}^{\bullet,i}\xi ^{j}_{i}+\rho \bar{u}_j\xi ^{j,i}_{i,\bullet}=0, \quad \bar{u}_j \in \mathscr{T}_M
\end{equation}
and the momentum equation for a viscous medium
\begin{equation}
\label{21}
\rho \frac{\partial (\bar{u}_j\xi^{j}_i)}{\partial t}+\rho \bar{u}_{j1}\bar{u}_{j2,\bullet}^{\bullet,k}e_{ki}^{j1j2}+\rho \bar{u}_{j1}\bar{u}_{j2}\alpha_{i}^{j1j2}=g_i^{(2)},
\end{equation}
where $g_i^{(2)}$ is the excited state external force field, there is
$$g_i^{(2)}=\sigma ^{(2)\bullet \bullet,k}_{ik,\bullet}+f_i,$$
here $\sigma^{(2)} _{ik}$ is the general stress state tensor in excited state, and $f_i$ is the body force.
\end{theorem}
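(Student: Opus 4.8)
\emph{Proof proposal.} The plan is to obtain Eqs.~(\ref{20}) and (\ref{21}) purely kinematically, by inserting the excited-state representation $u_i=\bar{u}_j\xi^{j}_i$ of Eq.~(\ref{17}) into the continuity and momentum equations of Lemma~1 and expanding every covariant derivative by the Leibniz rule. Throughout, I would use that the covariant derivative on $M$ is a derivation compatible with the metric, so it acts as an ordinary product rule on tensor products and commutes with the raising/lowering of indices encoded by the $\bullet$ notation; I would also use that $\delta^{j}_i$ is covariantly constant, $\delta^{j,k}_{i,\bullet}=0$, which is exactly what makes the passage between $u_i$ and $\bar{u}_j\xi^{j}_i$ lossless and lets the product rule applied to $\xi^{j}_i=\delta^{j}_i+\lambda^{j}_i$ produce no spurious terms.

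First, the continuity equation. Substituting $u_i^{\bullet}=\bar{u}_j^{\bullet}\xi^{j}_i$ into Eq.~(\ref{2}) and expanding $(\rho\,\bar{u}_j\,\xi^{j}_i)^{,i}$ by the product rule produces the three terms $\rho^{\bullet,i}\bar{u}_j\xi^{j}_i+\rho\bar{u}_{j,\bullet}^{\bullet,i}\xi^{j}_i+\rho\bar{u}_j\xi^{j,i}_{i,\bullet}$, which is precisely Eq.~(\ref{20}); the regularity $\bar{u}_j\in\mathscr{T}_M$ and the smoothness of $\xi^{j}_i$ (inherited from $\lambda^{j}_i$) guarantee that all these derivatives exist.

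Second, the momentum equation. In Eq.~(\ref{3}) the unsteady term becomes $\rho\,\partial_t(\bar{u}_j\xi^{j}_i)$ immediately. For the convective term $\rho u_j u_{i,\bullet}^{\bullet,j}$ I would substitute $u^{\bullet}\to\bar{u}_{j1}\xi^{j1}_{\bullet}$ in the first factor and $u_i\to\bar{u}_{j2}\xi^{j2}_i$ in the second, renaming dummy indices to $j1,j2$ to avoid collisions, and then expand $(\bar{u}_{j2}\xi^{j2}_i)^{\bullet,k}=\bar{u}_{j2,\bullet}^{\bullet,k}\xi^{j2}_i+\bar{u}_{j2}\xi^{j2,k}_{i,\bullet}$. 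Collecting, the first piece carries the factor $\xi^{j1}_k\xi^{j2}_i=e_{ki}^{j1j2}$ and the second the factor $\xi^{j1}_k\xi^{j2,k}_{i,\bullet}=\alpha_i^{j1j2}$, giving $\rho\bar{u}_{j1}\bar{u}_{j2,\bullet}^{\bullet,k}e_{ki}^{j1j2}+\rho\bar{u}_{j1}\bar{u}_{j2}\alpha_i^{j1j2}$. The right-hand side is then a matter of definition: inserting Eq.~(\ref{17}) into the constitutive law Eq.~(\ref{4}) defines the excited-state stress tensor $\sigma^{(2)}_{ik}$, and setting $g_i^{(2)}=\sigma^{(2)\bullet\bullet,k}_{ik,\bullet}+f_i$ closes Eq.~(\ref{21}).

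Finally, the conditions of the excited state, Eqs.~(\ref{18}) and (\ref{19}), enter at the point $x_k$: there $\lambda^{j}_i=0$ forces $\xi^{j}_i=\delta^{j}_i$, so $u_i=\bar{u}_i$ at the onset, while $\mathrm{d}\lambda^{j}_i/\mathrm{d}t\neq 0$ (equivalently $\mathrm{d}\xi^{j}_i/\mathrm{d}t\neq 0$) certifies that the fluctuation field is genuinely being generated; this is what identifies Eqs.~(\ref{20})--(\ref{21}) as the governing equations at a transition or separation point rather than a trivial rewriting of Lemma~1. I expect the main obstacle to be purely notational: keeping the $\bullet$-index bookkeeping consistent through every application of the Leibniz rule --- in particular checking that raising the contracted index commutes with covariant differentiation, so that the convective term groups exactly into the two adjustment tensors $e_{ki}^{j1j2}$ and $\alpha_i^{j1j2}$ and nothing is left over --- and confirming that the decomposition $\xi^{j}_i=\delta^{j}_i+\lambda^{j}_i$ does not introduce extra terms, which follows from $\delta^{j}_i$ being parallel.
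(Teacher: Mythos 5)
Your proposal is correct and follows essentially the same route as the paper: the theorem is obtained by substituting $u_i=\bar{u}_j\xi^{j}_i$ from Eq.~(\ref{17}) into Eqs.~(\ref{2}) and (\ref{3}), expanding by the product rule (with $\delta^{j}_i$ treated as a constant tensor of zero derivative), and absorbing the convective-term factors into the adjustment tensors $e_{ki}^{j1j2}$ and $\alpha_i^{j1j2}$. Your additional remarks on where Eqs.~(\ref{18}) and (\ref{19}) enter are consistent with the paper's intent and do not change the argument.
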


To better compare the two assumed flowfield states mathematically, the following relationship is easily derived:
\begin{equation}
\label{22}
\epsilon_{ki}^{j1j2}=e_{ki}^{j1j2}-\delta ^{j1}_k\delta ^{j2}_i=\lambda ^{j1}_k\lambda ^{j2}_i+\lambda ^{j1}_k\delta ^{j2}_i+\delta ^{j1}_k\lambda ^{j2}_i.
\end{equation}\par
The adjustment tensor coefficient $\epsilon_{ki}^{j1j2}$ indicates no definite physical meaning, just for mathematical calculation. 
The flowfield is excited by the fluid physical properties and changed under external conditions, included in Eqs.~(\ref{14}), (\ref{15}), (\ref{20}), and (\ref{21}). Therefore, the excited equation could be obtained.

\begin{theorem}[Excited equation 1] 
For the excited process, the velocity strain tensor $\lambda ^{j}_i$ satisfies the continuity equation
\begin{equation}
\label{23}
\rho^{\bullet,i}\bar{u}_j\lambda ^{j}_i+\rho \bar{u}_{j,\bullet}^{\bullet,i}\lambda ^{j}_i+\rho \bar{u}_j\lambda ^{j,i}_{i,\bullet}=0, \quad \bar{u}_j \in \mathscr{T}_M
\end{equation}
and the momentum equation for the viscous medium
\begin{equation}
\label{24}
\rho\frac{\partial\left(\bar{u}_j\lambda^{j}_i\right)}{\partial t}+\rho \bar{u}_{j1}\left(\bar{u}_{j2,\bullet}^{\bullet,k}\epsilon_{ki}^{j1j2}+\bar{u}_{j2}\xi^{j1}_k\lambda ^{j2,k}_{i,\bullet}\right)=g_i,
\end{equation}
where $g_i$ is the excited external force field, expressed as:
$$g_i=g_i^{(2)}-g_i^{(1)}=\sigma ^{(2)\bullet\bullet,k}_{ik,\bullet}-\sigma ^{(1)\bullet\bullet,k}_{ik,\bullet}.$$
\end{theorem}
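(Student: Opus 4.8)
The plan is to obtain Eqs.~(\ref{23}) and (\ref{24}) by subtracting the unexcited-state balance laws, Eqs.~(\ref{14}) and (\ref{15}), from the excited-state balance laws, Eqs.~(\ref{20}) and (\ref{21}), term by term, and then collapsing the result with the defining identity $\xi^{j}_i=\delta^{j}_i+\lambda^{j}_i$ of Eq.~(\ref{16}) together with the fact that the Kronecker tensor is constant, so that $\partial\delta^{j}_i/\partial t=0$ and $\delta^{j,k}_{i,\bullet}=0$. What makes the subtraction legitimate is that the averaged field $\bar{u}_j$ is the \emph{same} cotangent vector field in the two states --- only the strain tensor $\lambda^{j}_i$ (equivalently $\xi^{j}_i$) differs --- so both systems are already written in a common set of variables; I would state this consistency assumption explicitly at the outset, since the whole argument silently relies on it.

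For the continuity equation, subtracting Eq.~(\ref{14}) from Eq.~(\ref{20}) annihilates $\partial\rho/\partial t$ and leaves $\rho^{\bullet,i}\bar{u}_j(\xi^{j}_i-\delta^{j}_i)+\rho\bar{u}_{j,\bullet}^{\bullet,i}(\xi^{j}_i-\delta^{j}_i)+\rho\bar{u}_j\xi^{j,i}_{i,\bullet}=0$. The first two brackets equal $\lambda^{j}_i$ by Eq.~(\ref{16}), and in the last term $\xi^{j,i}_{i,\bullet}=(\delta^{j}_i+\lambda^{j}_i)^{,i}_{\bullet}=\lambda^{j,i}_{i,\bullet}$ because the Kronecker tensor is constant; this is Eq.~(\ref{23}) verbatim. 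Note that Eq.~(\ref{14}) carries no $\rho\bar{u}_j\delta^{j,i}_{i,\bullet}$ term for exactly this reason, which is precisely why nothing extraneous survives the comparison.

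For the momentum equation, subtracting Eq.~(\ref{15}) from Eq.~(\ref{21}) gives $\rho\,\partial\left(\bar{u}_j(\xi^{j}_i-\delta^{j}_i)\right)/\partial t+\rho\bar{u}_{j1}\bar{u}_{j2,\bullet}^{\bullet,k}\left(e_{ki}^{j1j2}-\delta^{j1}_k\delta^{j2}_i\right)+\rho\bar{u}_{j1}\bar{u}_{j2}\alpha_i^{j1j2}=g_i^{(2)}-g_i^{(1)}$. Here $\xi^{j}_i-\delta^{j}_i=\lambda^{j}_i$; the combination $e_{ki}^{j1j2}-\delta^{j1}_k\delta^{j2}_i$ is, upon expanding $e_{ki}^{j1j2}=(\delta^{j1}_k+\lambda^{j1}_k)(\delta^{j2}_i+\lambda^{j2}_i)$ and cancelling the $\delta\delta$ product, exactly the tensor $\epsilon_{ki}^{j1j2}$ of Eq.~(\ref{22}); and $\alpha_i^{j1j2}=\xi^{j1}_k\xi^{j2,k}_{i,\bullet}=\xi^{j1}_k\lambda^{j2,k}_{i,\bullet}$, again because $\delta^{j2,k}_{i,\bullet}=0$. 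Pulling $\rho\bar{u}_{j1}$ out of the two nonlinear terms then reproduces the left side of Eq.~(\ref{24}). Finally, since $g_i^{(1)}=\sigma^{(1)\bullet\bullet,k}_{ik,\bullet}+f_i$ and $g_i^{(2)}=\sigma^{(2)\bullet\bullet,k}_{ik,\bullet}+f_i$ carry the same body force $f_i$, the difference collapses to $g_i=\sigma^{(2)\bullet\bullet,k}_{ik,\bullet}-\sigma^{(1)\bullet\bullet,k}_{ik,\bullet}$, as stated.

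No step is analytically deep --- the whole proof is linear bookkeeping in the index calculus of Eqs.~(\ref{2})--(\ref{5}). The only delicate points, and hence where I expect errors to creep in, are: (i) keeping the dummy indices $j1,j2,k$ consistent when $\rho\bar{u}_{j1}$ is factored outside the bracket while $\xi^{j1}_k$ stays inside it, the summation over $j1$ being understood to scope the entire parenthesis; and (ii) ensuring that \emph{every} time a Kronecker tensor is differentiated, in time or in space, it is set to zero before the two systems are compared, since a single overlooked instance leaves a spurious term after subtraction.
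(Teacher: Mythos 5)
Your proposal is correct and follows exactly the route the paper intends: subtracting the unexcited-state equations (\ref{14}) and (\ref{15}) from the excited-state equations (\ref{20}) and (\ref{21}), then collapsing the differences via $\xi^{j}_i=\delta^{j}_i+\lambda^{j}_i$, the vanishing derivatives of the Kronecker tensor, and the identity (\ref{22}) for $\epsilon_{ki}^{j1j2}$. The paper leaves this derivation implicit (it merely introduces (\ref{22}) ``to better compare the two assumed flowfield states'' and asserts the result), so your explicit bookkeeping, including the observation that the common body force $f_i$ cancels in $g_i$, is a faithful reconstruction of the same argument.
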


At the same time, Eqs.~(\ref{14}) and (\ref{15}) under the excited condition are substituted into Eqs.~(\ref{23}) and (\ref{24}) to get the following theorem.

\begin{theorem}[Excited equation 2] If the flowfield is in the excited process under the action of physical properties, the velocity strain tensor $\lambda ^{j}_i$ satisfies the continuity equation
\begin{equation}
\label{25}
\lambda ^{j,i}_{i,\bullet}=0
\end{equation}
and the momentum equation for the viscous medium
\begin{equation}
\label{26}
\rho \bar{u}_j\frac{\partial \lambda ^{j}_i}{\partial t}+\rho \bar{u}_k\bar{u}_j \lambda^{j,k}_{i,\bullet}=g_i, \quad \bar{u} \in \mathscr{T}_M
\end{equation}
where $g_i$ is the excited external force field. 
\end{theorem}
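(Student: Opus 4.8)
The plan is to derive Eqs.~(\ref{25}) and (\ref{26}) by specializing the \emph{Excited equation 1} system, Eqs.~(\ref{23})--(\ref{24}), to the initial point of the excited state, where the \emph{Conditions of excited state} theorem furnishes $\lambda^{j}_i=0$ (Eq.~\ref{18}) together with a nonvanishing material derivative of $\lambda^{j}_i$ (Eq.~\ref{19}). The guiding observation is that every term in Eqs.~(\ref{23})--(\ref{24}) which carries an \emph{explicit algebraic factor} of $\lambda^{j}_i$ must vanish at that point, whereas terms containing a \emph{spatial or temporal derivative} of $\lambda^{j}_i$ survive; in parallel, the auxiliary tensors $\xi^{j}_i$, $e_{ki}^{j1j2}$ and $\epsilon_{ki}^{j1j2}$ collapse to their Kronecker parts via Eqs.~(\ref{16}) and (\ref{22}). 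A dummy-index relabelling then yields the stated forms. Note that $g_i=g_i^{(2)}-g_i^{(1)}$ already builds in the subtraction of the unexcited momentum balance Eq.~(\ref{15}) from the excited one, so beyond recording that Eqs.~(\ref{14})--(\ref{15}) hold at this point simultaneously with Eq.~(\ref{18}) no further reworking of the unexcited equations is required.

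For the continuity equation: in Eq.~(\ref{23}) the summands $\rho^{\bullet,i}\bar{u}_j\lambda^{j}_i$ and $\rho\bar{u}_{j,\bullet}^{\bullet,i}\lambda^{j}_i$ are linear in $\lambda^{j}_i$ and hence vanish at the key point, leaving only $\rho\bar{u}_j\lambda^{j,i}_{i,\bullet}$, since the divergence of $\lambda^{j}_i$ need not vanish there. This gives $\rho\bar{u}_j\lambda^{j,i}_{i,\bullet}=0$; dividing out $\rho$ and reading the result as a constraint that the mean field $\bar{u}_j$ cannot satisfy identically unless the contracted derivative vanishes component-by-component, one extracts Eq.~(\ref{25}), $\lambda^{j,i}_{i,\bullet}=0$.

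For the momentum equation: expand the time derivative in Eq.~(\ref{24}) by the product rule, $\partial(\bar{u}_j\lambda^{j}_i)/\partial t=\bar{u}_j\,\partial\lambda^{j}_i/\partial t+\lambda^{j}_i\,\partial\bar{u}_j/\partial t$; the last piece is annihilated by Eq.~(\ref{18}). Next, by Eq.~(\ref{22}), $\epsilon_{ki}^{j1j2}=\lambda^{j1}_k\lambda^{j2}_i+\lambda^{j1}_k\delta^{j2}_i+\delta^{j1}_k\lambda^{j2}_i$ has an explicit $\lambda$ in each term, so $\epsilon_{ki}^{j1j2}=0$ at the point and the term $\rho\bar{u}_{j1}\bar{u}_{j2,\bullet}^{\bullet,k}\epsilon_{ki}^{j1j2}$ drops. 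Finally $\xi^{j1}_k=\delta^{j1}_k+\lambda^{j1}_k$ reduces to $\delta^{j1}_k$, so $\rho\bar{u}_{j1}\bar{u}_{j2}\xi^{j1}_k\lambda^{j2,k}_{i,\bullet}=\rho\bar{u}_k\bar{u}_{j2}\lambda^{j2,k}_{i,\bullet}$, and relabelling $j2\to j$ leaves $\rho\bar{u}_j\,\partial\lambda^{j}_i/\partial t+\rho\bar{u}_k\bar{u}_j\lambda^{j,k}_{i,\bullet}=g_i$, which is Eq.~(\ref{26}) with $g_i$ the same excited external force field as in \emph{Excited equation 1}.

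The delicate step — the one I would flag explicitly rather than bury — is the passage from the single contracted scalar identity $\rho\bar{u}_j\lambda^{j,i}_{i,\bullet}=0$ to the full vector statement $\lambda^{j,i}_{i,\bullet}=0$, which strictly needs either a non-degeneracy/genericity hypothesis on $\bar{u}_j$ at the key point or an independent argument killing each component; the analogous issue does not arise for Eq.~(\ref{26}) because there the free index $i$ is carried through untouched. A secondary point worth stating clearly is the regime of validity: Eqs.~(\ref{25})--(\ref{26}) are asserted exactly at the onset point of excitation (the $\varepsilon\to 0$ limit of the \emph{Conditions for generating fluctuation velocity} theorem), the unique place where $\lambda^{j}_i=0$ coexists with a nonzero material derivative of $\lambda^{j}_i$; away from that point it is the unsimplified Eqs.~(\ref{23})--(\ref{24}) that govern the flow. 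Everything else is routine tensor bookkeeping.
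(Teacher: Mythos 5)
Your proposal reconstructs essentially the same derivation the paper intends: the paper offers no explicit proof beyond the one-line remark that Eqs.~(14)--(15) ``under the excited condition'' are substituted into Eqs.~(23)--(24), which amounts to exactly your specialization of \emph{Excited equation 1} at the onset point via $\lambda^{j}_i=0$ (so that $\epsilon_{ki}^{j1j2}=0$, $\xi^{j}_i=\delta^{j}_i$, and the algebraic-in-$\lambda$ terms drop while the derivative terms survive). The one step you rightly flag --- inferring the vector identity $\lambda^{j,i}_{i,\bullet}=0$ from the single contracted scalar relation $\rho\,\bar{u}_j\lambda^{j,i}_{i,\bullet}=0$ --- is a genuine logical gap that the paper itself leaves unaddressed, so your explicit non-degeneracy caveat on $\bar{u}_j$ is an improvement rather than a deviation.
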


At this point, the mathematical significance of Eq.~(\ref{25}) can be described according to Gauss' theorem, when excited state is about to take place. The excited state position is taken as the origin and an open sphere $\tau$ with radius $\varepsilon >0$ is established, the following conclusion, similar to Eq.~(\ref{25}), could be drawn:
\begin{equation}
\label{27}
\lambda ^{j,i}_{i,\bullet}=\lim_{\Delta \tau \to 0}\frac{1}{\Delta \tau }\oint_{\sigma }\lambda ^{j}_in^{i}d\sigma=0.
\end{equation}\par

\section{\label{sec:level5}Stress state analysis of excited equations}
The stress state\cite{Sui2017} at any point in the flowfield is uniquely determined by the stress vectors on the three orthogonal planes of action at that point, and each stress vector can be represented by three components again. Therefore, in this article, we assume that the fluid in the flowfield is isotropic, the relationship between stress tensor and deformation rate tensor is linear, and their function could be expressed as in Eq.~(\ref{4}).

For the unexcited state, the stress tensor of the flowfield could be expressed as: 
\begin{equation}
\label{28}
\sigma ^{(1)}_{ik}=-p_{(1)}\delta _{ik}+\psi \mu\bar{u}_{s,\bullet}^{\bullet,s}\delta _{ik}+\mu (\bar{u}_{i,k}+\bar{u}_{k,i}).
\end{equation}

Considering the consistency of the flowfield frame, the index of the variable subscripts could be transformed. So, the gradient of Eq.~(\ref{28}) could be read as:
\begin{equation}
\label{29}
\begin{split}
\sigma ^{(1)\bullet\bullet,k}_{ik,\bullet}=-p^{\bullet,k}_{(1)}\delta _{ik}+\psi \mu\bar{u}_{t,\bullet \bullet}^{\bullet,s k}\delta^{t}_{s}\delta _{ik}\\
+\mu \left(\bar{u}^{\bullet,\bullet k}_{j,k\bullet}\delta_i^j+\bar{u}_{j,i\bullet}^{\bullet,\bullet k}\delta_k^j\right).
\end{split}
\end{equation}

For the excited state, the stress tensor of the flowfield could be written as follows:
\begin{equation}
\label{30}
\begin{split}
\sigma ^{(2)}_{ik}=-p_{(2)}\delta _{ik}+ \psi\mu\bar{u}_{t,\bullet}^{\bullet,s}\xi_{s}^t\delta _{ik}+ \psi\mu\bar{u}_{t}\xi^{t,s}_{s,\bullet}\delta _{ik}\\
+\mu (\bar{u}_{j,k}\xi^j_i+\bar{u}_{j}\xi^{j,\bullet}_{i,k}+\bar{u}_{j,i}\xi^{j}_{k}+\bar{u}_{j}\xi^{j,\bullet}_{k,i}).
\end{split}
\end{equation}

The gradient of the above expression is 
\begin{equation}
\label{31}
\begin{split}
\sigma ^{(2)\bullet\bullet,k}_{ik,\bullet}=-p^{\bullet,k}_{(2)}\delta _{ik}+\psi\mu\bar{u}_{t,\bullet \bullet}^{\bullet,sk}\xi_{s}^t\delta _{ik}\\
+\psi\mu\bar{u}_{t,\bullet}^{\bullet,s}\xi_{s, \bullet}^{t,k}\delta _{ik}+\psi\mu\bar{u}_{t,\bullet}^{\bullet,k}\xi^{t,s}_{s,\bullet}\delta _{ik}+ \psi\mu\bar{u}_{t}\xi^{t,sk}_{s,\bullet \bullet}\delta _{ik}\\
+\mu\bar{u}_{j,k \bullet}^{\bullet,\bullet k}\xi^j_i+\mu\bar{u}_{j,k}\xi^{j,k}_{i,\bullet}+\mu\bar{u}_{j,\bullet}^{\bullet,k}\xi^{j,\bullet}_{i,k}+\mu\bar{u}_{j}\xi^{j,\bullet k}_{i,k \bullet}\\
+\mu\bar{u}_{j,i \bullet}^{\bullet,\bullet k}\xi^{j}_{k}+\mu\bar{u}_{j,i}\xi^{j,k}_{k,\bullet}+\mu\bar{u}_{j,\bullet}^{\bullet, k}\xi^{j,\bullet}_{k,i}+\mu\bar{u}_{j}\xi^{j,\bullet k}_{k,i \bullet}.
\end{split}
\end{equation}

Eq.~(\ref{29}) is subtracted from Eq.~(\ref{31}), and the stress difference between two states could be obtained. A combination Eqs.~(\ref{18}) and (\ref{25}), and derivative of unit tesor $\delta_i^{j}$ as one zero tensor, the final form is that 

\begin{equation}
\label{32}
\begin{split}
g_i=\sigma ^{(2)\bullet\bullet,k}_{ik,\bullet}-\sigma ^{(1)\bullet\bullet,k}_{ik,\bullet}=p_{\bullet,i}^{(1)}-p_{\bullet,i}^{(2)}+\mu\bar{u}_{j,k}\lambda^{j,k}_{i,\bullet}\\
+\psi\mu\left(\bar{u}_{j,\bullet}^{\bullet,k}\lambda_{k, i}^{j,\bullet}+ \bar{u}_{j}\lambda^{j,\bullet k}_{k,i\bullet}\right)+\mu\left(\bar{u}_{j,\bullet}^{\bullet,k}\lambda^{j,\bullet}_{i,k}+\bar{u}_{j}\lambda^{j,\bullet k}_{i,k \bullet}\right)\\
+\mu\left(\bar{u}_{j,\bullet}^{\bullet, k}\lambda^{j,\bullet}_{k,i}+\bar{u}_{j}\lambda^{j,\bullet k}_{k,i \bullet}\right).\\
\end{split}
\end{equation}

Thus, the final excitation law is obtained by synthesizing Eqs.~(\ref{26}) and (\ref{32}).
\begin{theorem}[Excitation law] To reach the excited state, the velocity strain tensor $\lambda_i^j$ satisfies the continuity condition
\begin{equation}
\label{33}
\lambda ^{j}_i=0, \quad \lambda ^{j,i}_{i,\bullet}=0
\end{equation}
and the momentum equation for the viscous medium
\begin{equation}
\begin{split}
\label{34}
\rho u_j\frac{\partial \lambda ^{j}_i}{\partial t}+\rho {u}_k{u}_j \lambda^{j,k}_{i,\bullet}=P_i+\mu{u}_{j,k}\lambda^{j,k}_{i,\bullet}\\
+\mu \left(u_j\lambda^{j,\bullet}_{i,k}+\phi u_j\lambda^{j,\bullet}_{k,i}\right)^{,k},\quad u_i \in \mathscr{T}_M,
\end{split}
\end{equation}
where $\phi=\psi+1$ is auxiliary coefficient. 
\end{theorem}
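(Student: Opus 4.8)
The plan is to \emph{synthesize} results already established rather than to compute anything from scratch: the excited-state momentum balance of Excited equation~2, $\rho\bar{u}_j\partial_t\lambda^{j}_i+\rho\bar{u}_k\bar{u}_j\lambda^{j,k}_{i,\bullet}=g_i$ in Eq.~(\ref{26}), together with the constitutive evaluation of the excited external force field $g_i$ produced by the stress-state analysis of Section~\ref{sec:level5}, namely Eq.~(\ref{32}). The continuity half of the claim, $\lambda^{j}_i=0$ together with $\lambda^{j,i}_{i,\bullet}=0$, requires no new argument: it is exactly the pair Eq.~(\ref{18}) and Eq.~(\ref{25}) collected into Eq.~(\ref{33}). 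Hence the whole content of the theorem is the reduction of the right-hand side of Eq.~(\ref{26}) to the form appearing in Eq.~(\ref{34}).

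First I would insert Eq.~(\ref{32}) into Eq.~(\ref{26}) and abbreviate the pressure-gradient difference as $P_i:=p^{(1)}_{\bullet,i}-p^{(2)}_{\bullet,i}$. What remains on the right is $\mu\bar{u}_{j,k}\lambda^{j,k}_{i,\bullet}$ plus three bracketed viscous groups: the $\psi\mu$-group $\psi\mu(\bar{u}^{\bullet,k}_{j,\bullet}\lambda^{j,\bullet}_{k,i}+\bar{u}_j\lambda^{j,\bullet k}_{k,i\bullet})$, a $\mu$-group built on $\lambda^{j,\bullet}_{i,k}$, and a second $\mu$-group built on $\lambda^{j,\bullet}_{k,i}$. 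The decisive algebraic step is that the $\psi\mu$-group and the $\mu$-group on $\lambda^{j,\bullet}_{k,i}$ are \emph{proportional}, so they merge into $(\psi+1)\mu(\bar{u}^{\bullet,k}_{j,\bullet}\lambda^{j,\bullet}_{k,i}+\bar{u}_j\lambda^{j,\bullet k}_{k,i\bullet})$; this is precisely where the auxiliary coefficient $\phi=\psi+1$ of the statement is born.

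Next I would recognize each surviving two-term bracket as a single divergence produced by the product rule with $\mu$ constant: $\bar{u}^{\bullet,k}_{j,\bullet}\lambda^{j,\bullet}_{i,k}+\bar{u}_j\lambda^{j,\bullet k}_{i,k\bullet}=(\bar{u}_j\lambda^{j,\bullet}_{i,k})^{,k}$ and $\bar{u}^{\bullet,k}_{j,\bullet}\lambda^{j,\bullet}_{k,i}+\bar{u}_j\lambda^{j,\bullet k}_{k,i\bullet}=(\bar{u}_j\lambda^{j,\bullet}_{k,i})^{,k}$, so that the whole viscous part collapses to $\mu\bar{u}_{j,k}\lambda^{j,k}_{i,\bullet}+\mu(\bar{u}_j\lambda^{j,\bullet}_{i,k}+\phi\,\bar{u}_j\lambda^{j,\bullet}_{k,i})^{,k}$. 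Finally, since the identity is evaluated at a key point where $\lambda^{j}_i=0$ and therefore $u'_i=\bar{u}_j\lambda^{j}_i=0$, Eq.~(\ref{5}) gives $u_i=\bar{u}_i$ there; substituting $u$ for $\bar{u}$ throughout turns the assembled equation into Eq.~(\ref{34}), which completes the proof.

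The step I expect to be the genuine obstacle is the index bookkeeping of the previous paragraph: one must check that every $\bullet$-placeholder slot introduced in Eq.~(\ref{32}) is contracted consistently so that the product-rule regrouping is legitimate, and that, after imposing $\lambda^{j}_i=0$ and $\lambda^{j,i}_{i,\bullet}=0$, no stray cross term of the type $\bar{u}_{j,i}(\cdots)$ or $\bar{u}_j\lambda^{j,i}_{i,\bullet}$ survives. There is also a subtlety in the replacement $\bar{u}\mapsto u$: the equality $\bar{u}_i=u_i$ holds only pointwise on the critical locus, not for the spatial derivatives $\bar{u}_{j,k}$ in a neighborhood, so one should argue, using the $\varepsilon\to0$ limiting process of the excited-state conditions, that the derivative coefficients retained in Eq.~(\ref{34}) are indeed the correct limits. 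Once the notational conventions are fixed, everything else is direct substitution.
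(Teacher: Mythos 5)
Your proposal is correct and follows essentially the same route the paper takes: the paper itself obtains the theorem by ``synthesizing'' Eq.~(26) with the stress difference Eq.~(32), which is exactly your substitution, the merging of the $\psi\mu$ and $\mu$ groups on $\lambda^{j,\bullet}_{k,i}$ into $\phi=\psi+1$, the product-rule regrouping into divergences, and the replacement $\bar{u}_i\mapsto u_i$ justified by $\lambda^j_i=0$ at the key point. The subtlety you flag about derivative coefficients under that replacement is real but is glossed over by the paper as well, so no further repair is needed relative to the paper's own argument.
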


\section{\label{sec:level6}Degenerate form of excitation law}
The moment when the flowfield variations could be regarded as the superposition state. Therefore, Eqs.~(\ref{33}) and (\ref{34}) could be equivalent to the fundamental equations of flowfield instant variation. The applicable conditions are consistent with Eqs.~(\ref{2}) and (\ref{3}), which could solve the problems of separation, transition, shock wave, reattachment, and so on. Because Eqs.~(\ref{33}) and (\ref{34}) can not be solved completely at present, the equations are degenerated in order to meet the engineering needs.\par

As shown in Fig.~\ref{p1}, the degenerate form discusses the position B. Degenerate form\cite{Zhan2021} means to add some necessary degenerative conditions so that the problem can be solved further easier. Although, it breaks one certain balance via the perspective of mathematical logic, the degenerate form can help to achieve the prediction effectively and accurately. Its accuracy strongly depends on the degenerative conditions and operation method. It should be noted that for the identical problems, the form and condition of degradation are not unique. At the time of excitation, it can be considered that time is still and the shape of superposition state does not change.\par

\begin{definition}[Degenerative condition] The basic condition of flowfield degradation is
\begin{equation}
\label{35}
\frac{d x_k}{d t}=\frac{\partial x_k}{\partial t}
\end{equation}
\end{definition}

For the common physical investigations, we consider that coordinates and time are mutually independent variables\cite{Liu2021}. If space is $n$-dimensional, then $n$ coordinates are independent. In continuum mechanics, the longitudinal displacement of a deformable body can cause its lateral displacement at the same time \cite{Thota2021}. For example, a drop of water on the desktop disperses automatically, which is obviously contrary to coordinate independence. However, considering that the superposition state is a state that it is independent and the instantaneous shape is unchanged as shown in Fig.~\ref{p2}, the physical meaning of the degenerate condition  significance is to ignore these shape-change influence factors.

\begin{figure}[ht]
\includegraphics[scale=0.2,angle=0]{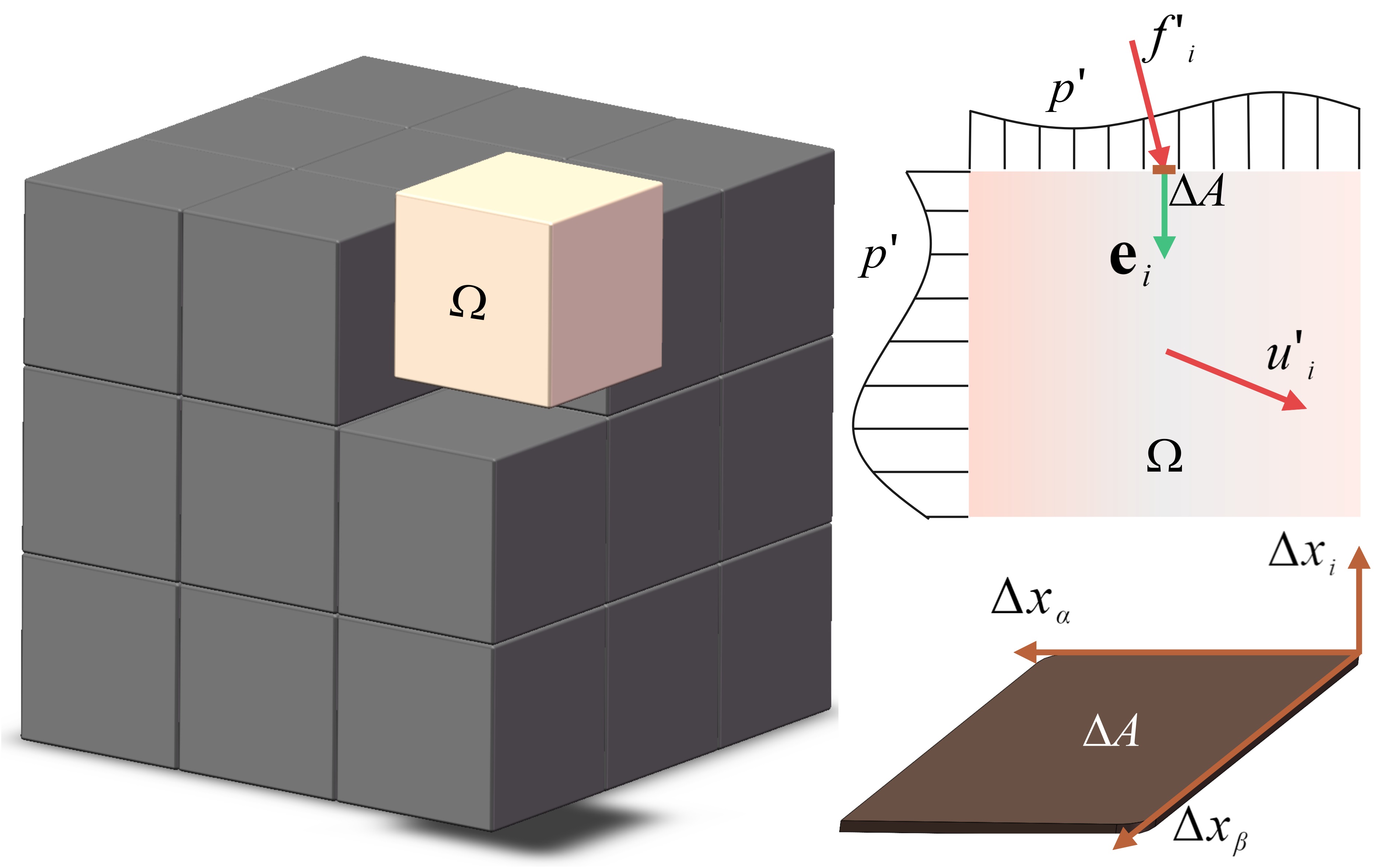}
\caption{\label{p2}Independent excited position diagram (the fluctuation velocity is generated by fluctuation force $f'_i$ or fluctuation pressure $p'$ according to the isotropic condition; $p'$ is the same in all directions)}
\end{figure}

In accordance with the above analysis methods, the fluid had previously advanced a small distance in a very short time period $\Delta t$. The velocity strain tensor develops from zero to $\lambda_i^j$. According to the difference principle, Eqs.~(\ref{33}) and (\ref{34}) could be expanded via the combination with degradation condition, shown in Eq.~(\ref{35}). While, the shape change with time, which will be deduced in detail in the discussion of precision later.\par
\begin{theorem}[Degenerate form of excitation law] If the system is in an excited state, the fluctuation velocity $u'_i \in \mathscr{T}_M$ satisfies 
\begin{equation}
\label{36}
\rho\frac{\partial u'_i}{\partial t}=\frac{P_i+P'_i}{1+\gamma},
\end{equation}
where $\gamma=1-\beta$ is the dimensionless dynamic coefficient; $P_i$ is the pressure gradient difference. Let the fluctuation velocity gradient is $C'_{ki}$, the characteristic pressure $P'_i$ owns the following form
$$P'_i=\mu \left(\phi\frac{\partial C'_{ki}}{\partial t}+\frac{\partial C'_{ik}}{\partial t}\right)\left(u_k\right)^{-1}.$$
Let velocity gradient is $C_{kj}$, the viscosity coefficient $\beta$ has the following form
$$\beta = \mu C_{kj}\left(\rho u_k u_j\right)^{-1}$$
here $\mu C_{kj}$ is the viscous stress; $\rho u_k u_j$ is the inertial stress.
\end{theorem}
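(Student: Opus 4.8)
The plan is to specialize the Excitation law, Eqs.~(\ref{33})--(\ref{34}), to the neighbourhood of the exit point $B$ of the superposition state as $\Delta t\to 0$, and then to impose the degenerative condition~(\ref{35}) so as to trade convective and spatial derivatives for temporal ones. First I would evaluate Eq.~(\ref{34}) at the excited-state point, where $\lambda^{j}_i=0$ and $\lambda^{j,i}_{i,\bullet}=0$ by Eq.~(\ref{33}) and $u_i=\bar u_i$ because $u'_i=0$ there. Since $\lambda^{j}_i$ itself vanishes, every Leibniz expansion of a product $\bar u_j\lambda^{j}_i$ collapses to $\bar u_j$ times a derivative of $\lambda^{j}_i$; hence $\bar u_j\,\partial_t\lambda^{j}_i=\partial_t u'_i$, $\bar u_j\lambda^{j,k}_{i,\bullet}={u'}^{\bullet,k}_{i,\bullet}$, while $\bar u_j\lambda^{j,\bullet}_{i,k}$ and $\bar u_j\lambda^{j,\bullet}_{k,i}$ become precisely the two orderings of the fluctuation velocity gradient, which I identify with $C'_{ki}$ and $C'_{ik}$. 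Substituting converts Eq.~(\ref{34}) into a closed balance for the nascent field $u'_i$,
\[
\rho\,\partial_t u'_i+\rho\,u_k u_j\lambda^{j,k}_{i,\bullet}=P_i+\mu\,u_{j,k}\lambda^{j,k}_{i,\bullet}+\mu\bigl(C'_{ki}+\phi\,C'_{ik}\bigr)^{,k}.
\]

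Next I would collect the two terms carrying $\lambda^{j,k}_{i,\bullet}$: with $C_{kj}=u_{j,k}$ the mean velocity gradient, $(\rho u_k u_j-\mu C_{kj})\lambda^{j,k}_{i,\bullet}=\rho u_k u_j(1-\beta)\lambda^{j,k}_{i,\bullet}=\gamma\,\rho\,u_k\,{u'}^{\bullet,k}_{i,\bullet}$, where $\beta=\mu C_{kj}(\rho u_k u_j)^{-1}$ is the ratio of the viscous stress $\mu C_{kj}$ to the inertial stress $\rho u_k u_j$ and $\gamma=1-\beta$. I would then invoke the degenerative condition: near $B$, with $\Delta t\to 0$ and the shape of the superposition state frozen, Eq.~(\ref{35}) makes the material transport $u_k\,\partial^{k}(\cdot)$ act as a pure time derivative, so the convective term becomes $\gamma\,\rho\,u_k\,{u'}^{\bullet,k}_{i,\bullet}\mapsto\gamma\,\rho\,\partial_t u'_i$ and the viscous divergence becomes $\mu\bigl(C'_{ki}+\phi\,C'_{ik}\bigr)^{,k}\mapsto\mu\bigl(\phi\,\partial_t C'_{ki}+\partial_t C'_{ik}\bigr)(u_k)^{-1}=:P'_i$, the characteristic pressure of the statement. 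Collecting terms, the left-hand side is $(1+\gamma)\rho\,\partial_t u'_i$ and the right-hand side is $P_i+P'_i$; dividing by $1+\gamma$ gives Eq.~(\ref{36}), with $\gamma=1-\beta$ the dimensionless dynamic coefficient and $\phi=\psi+1$ inherited from Eq.~(\ref{34}).

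The hard part will be the degenerative-condition step. Condition~(\ref{35}) is a modelling hypothesis about the frozen instant, and turning it into the precise replacement rules ``$u_k\partial^{k}\mapsto\partial_t$'' and ``$\partial^{k}\mapsto(u_k)^{-1}\partial_t$'' — together with the accompanying formal quotients $\beta=\mu C_{kj}/(\rho u_k u_j)$ and the factor $(u_k)^{-1}$, which are tensor-by-tensor ratios rather than literal index contractions — requires a careful statement of the sense in which the $\Delta t\to 0$ limit is taken and of which convective contribution is retained (the $\gamma$-weighted one) versus absorbed into $P'_i$. A secondary, purely bookkeeping difficulty is tracking covariant versus contravariant derivative indices in passing from $\lambda^{j}_i$ to $u'_i$ and $C'_{ki}$, since that is what fixes which of the two gradient orderings carries the coefficient $\phi$.
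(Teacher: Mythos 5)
Your proposal is correct and follows essentially the same route as the paper's own proof: both specialize Eq.~(\ref{34}) at the onset point where $\lambda^j_i=0$, group the convective term with the $\mu u_{j,k}\lambda^{j,k}_{i,\bullet}$ term to produce the factor $\rho u_ku_j(1-\beta)$, and use the degenerative condition (the paper does this via finite differences with $\Delta x_k/\Delta t\to\hat u_k=u_k$, which is exactly your replacement rule $\partial^k\mapsto(u_k)^{-1}\partial_t$) to obtain the coefficient $1+\gamma$ and the characteristic pressure $P'_i$. The only blemish is the transposition of $\phi$ between $C'_{ik}$ and $C'_{ki}$ in your intermediate display, which you already flag as a bookkeeping point and which resolves to the paper's ordering ($\phi$ accompanying $\lambda^{j,\bullet}_{k,i}$, i.e.\ $C'_{ki}$) in your final expression.
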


\begin{proof}
If there is an excited flowfield under the action of physical properties, then the velocity strain tensor $\lambda_i^j$ satisfies
\begin{equation}\nonumber
\begin{split}
\rho u_j\frac{\partial \lambda ^{j}_i}{\partial t}+\rho {u}_k{u}_j \lambda^{j,k}_{i,\bullet}=P_i+\mu{u}_{j,k}\lambda^{j,k}_{i,\bullet}\\
+\mu \left(u_j\lambda^{j,\bullet}_{i,k}+\phi u_j\lambda^{j,\bullet}_{k,i}\right)^{,k},\quad u_i \in \mathscr{T}_M,
\end{split}
\end{equation}

The flowfield will gradually become excited state under the effect of pressure derivative differences. In a short period of time $\Delta t$, the velocity strain rate tensor changes from zero to $\lambda^{j}_i$. So, we obtain the following relationship:
\begin{equation}\nonumber
\begin{split}
\frac{\rho u'_i}{\Delta t}+\rho u_k \frac{u'_i}{\Delta x_k} =P_i +\mu \frac{\Delta u_j}{\Delta x^k \Delta x_k} u'_i\left(u_j\right)^{-1}+\\
\mu\left( \frac{u'_i}{\Delta x^k \Delta x_k}+\phi\frac{u'_k}{\Delta x^i \Delta x_k}\right),
\end{split}
\end{equation}

After a simple mathematical rearrangement, the above formula becomes
\begin{equation}\nonumber
\begin{split}
\left[\rho+ \left(\rho u_k u_j -\mu \frac{\Delta u_j}{\Delta x^k} \right)\left(\frac{\Delta x_k}{\Delta t}\right)^{-1}\left(u_j\right)^{-1}\right]u'_i\\
=P_i \Delta t+\mu\left( \frac{\Delta u'_i }{\Delta x^k} \left(\frac{\Delta x_k}{\Delta t}\right)^{-1}+\phi\frac{\Delta u'_k }{\Delta x^i}\left(\frac{\Delta x_k}{\Delta t}\right)^{-1}\right)
\end{split}
\end{equation}
as $\Delta t \rightarrow 0$, the change of each physical variable in the above formula is as follows:
$$\frac{\Delta u_j}{\Delta x^k} \rightarrow \frac{\partial u_j}{\partial x^k} = C_{jk}, \ \frac{\Delta x_k}{\Delta t} \rightarrow \frac{\partial x_k}{\partial t}=\hat{u}_k, \ \frac{\Delta u'_k}{\Delta x^i} \rightarrow \frac{\partial u'_k}{\partial x^i}=C'_{ki},$$
$$\frac{\Delta u'_i}{\Delta x^k} \rightarrow \frac{\partial u'_i}{\partial x^k}=C'_{ik},\ \frac{\Delta C'_{ik}}{\Delta t} \rightarrow \frac{\partial C'_{ik}}{\partial t},\ \frac{\Delta C'_{ki}}{\Delta t} \rightarrow \frac{\partial C'_{ki}}{\partial t}.$$

Similarly, the fluctuation velocity gradients change from zero to $C'_{ki}$ or $C'_{ik}$, we obtain
\begin{equation}\nonumber
\begin{split}
\left[1+ \left(\rho u_k u_j -\mu C_{jk}\right)\left(\rho \hat{u}_k u_j\right)^{-1}\right]\rho \frac{\Delta u'_i}{\Delta t} \\
=P_i+\mu\left(\frac{\partial C'_{ik}}{\partial t}+\phi\frac{\partial C'_{ki}}{\partial t}\right)\left(\hat{u}_k\right)^{-1}.
\end{split}
\end{equation}

Finally, combining degradation condition $u_k=\hat{u}_k$, we conclude that
$$\rho \frac{\partial u'_i}{\partial t}=\frac{1}{1+\gamma} \left(P_i+P'_i\right),$$
where $P_i$ is the pressure gradient difference; $P'_i$ is the characteristic pressure; $C_{jk}$ is the velocity gradient; $\beta$ is the dimensionless coefficient of the viscous--inertial force ratio, where
$$\gamma=1-\beta, \quad \beta = \mu C_{jk}\left(\rho u_k u_j\right)^{-1}; $$
$$P'_i=\mu \left(\phi\frac{\partial C'_{ki}}{\partial t}+\frac{\partial C'_{ik}}{\partial t}\right)\left(u_k\right)^{-1}.$$\par

This completes the proof.
\end{proof}

As the Reynolds number \cite{Chen2021} is defined as the ratio of inertial force to viscous force, $\beta$ could be regarded as the reciprocal of the ``Reynolds number at each point'' in the flowfield. This is because, in Eqs.~(\ref{2}) and (\ref{3}), the pressure term and the viscous term (diffusion term) describe the effects of pressure and viscous force on the flowfield respectively. If the velocity in the flowfield does not change significantly, then\par
\begin{equation}
\label{37}
Re=\frac{U_{0}L}{\nu}=\frac{\rho U_{0}^{2}}{\mu U_{0}/L}\approx \frac{1}{\beta}.
\end{equation}

As a state, the instantaneous velocity in the excited state could be considered to be approximately equal. In the study of excited state fiuld mechanics, $\beta$ and $\gamma$ are taken as constants, and $\beta$ is approximately equal to the reciprocal of $Re$. This is only an physics approximation  under normal circumstances, not absolutely correct. In addition, when $Re$ is particularly small, $\beta$ is very large and the reciprocal of $1+\gamma$ is especially small. Even if $\beta$ can not be regarded as a constant at this time due to the small fluctuation velocity, the flowfield also is not easy to be excited. It is meaningless to discuss the excited state fluid mechanics. Therefore, for common fluids, such as air, water, etc., in the calculation of excited state position prediction, it could be approximately considered as $\beta \approx 0$.\par
In order to reveal the important properties of excited state in Eq.~(\ref{36}) fully, we need to study the mathematical and physical properties of pressure gradient difference and characteristic pressure in detail.

\section{\label{sec:level7}Physical properties of pressure gradient difference and characteristic pressure}
According to the derivation of the equations in the previous section, the root cause of excitation is determined by external and internal, among which the external reason is the change of pressure difference around the excitation position and the internal ones are the viscous and diffusion terms of velocity. As for external causes, we now examine the changes in the pressure difference around the excited position. Studies of pressure can reflect changes in other flow parameters. According to Eq.~(\ref{36}), there is a non-negligible relationship between the stress changes in flowfield and the pressure differences.\par
However, for practical applications and experimental measurements, the pressure parameters are not always directly measured. Experimental methods\cite{Sengupta2020}, DNS\cite{Bailey2020} and LES\cite{Balin2020} are generally used, and the power spectrum semi-empirical model is widely applied in the balanced turbulence boundary of a plate, but the experimental instruments are not sufficiently accurate and the experimental preiod takes too long. In engineering applications, such as the design of aircraft \cite{Keshtegar2017} and ships \cite{Kujala2019}, a pressure difference is generated by pressure variations inside the flowfield, leading to sharp changes in flowfield velocity. Hence, the velocity gradient becomes larger, so there is a greater friction force inside the flowfield, which also indicates that stronger vortices will be generated.\par

Let us return to the idea of superposition. If there are $\textbf{n}$ possible non-interlaced states in an uncertain event, the difference and ratio of the state characteristic equation could be used to describe the causes of the induced state. The difference describes the additional effect, while the ratio describes the rate of change of the additional effect.\par
As mentioned before, when the system goes from one equilibrium state to another, the existence of a difference indicates that a new equilibrium form needs to be added in the process of breaking the equilibrium. The pressure gradient difference is in such a form.\par
\begin{equation}
\label{38}
P_i=p_{\bullet,i}^{(1)}-p_{\bullet,i}^{(2)}=-\Delta p_{\bullet,i},
\end{equation}
where $\Delta p_{\bullet,i}$ is the variation of the pressure gradient. Under external factors, to achieve equilibrium after excitation, we must provide the equilibrium form of Eq.~(\ref{36}) on the basis of the original equilibrium. This change is caused by the temporal or spatial distribution of the pressure gradient or characteristic pressure. \par
The divergence operator is commonly used to describe the convergence degree of spatial distribution of a vector field, and the Laplace operator distribution of pressure represents the effect of some external influences on the system. In a Minkowski space\cite{Jimenez2021}, the Laplace operator is equivalent to the d'Alembert operator\cite{Ciaglia2020}. Let $\zeta$ evaluate the spatial variation rate of the pressure gradient difference, and $\Delta x_i \in \Omega$ represents the scale of spatial change. Then, according to Eq.~(\ref{38}), we obtain\par

\begin{equation}
\label{39}
\zeta = \lim\limits_{\Delta x_i\to 0}\frac{\Delta p_{\bullet,i}}{\Delta x_i}=\mathscr{L}p,
\end{equation}
where $\mathscr{L}$ is the Laplace operator. Using the same idea, we divide the two sides of Eq.~(\ref{36}) by $\Delta x_i$. Since the fluctuation velocity $u'_i$ and characteristic pressure $P'_i$ vary from 0, the following theorem could be got.

\begin{theorem}[Finally degenerate form of excitation law] If the system is excited, the fluctuation velocity $u'_i \in \mathscr{T}_M$ satisfies
\begin{equation}
\label{40}
\frac{\partial \theta'}{\partial t}=\frac{1}{1+\gamma}\left(\frac{P'^{\bullet,i}_{i,\bullet}}{\rho}-\frac{\mathscr{L}p}{\rho}\right),
\end{equation}
where $\nu$ is kinematic viscosity coefficient, $\theta'=u'^{\bullet,j}_{j,\bullet}$ is fluctuation velocity expansion, and $p$ is pressure.
\end{theorem}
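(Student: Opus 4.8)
The plan is to derive Eq.~(\ref{40}) by taking the spatial divergence of the degenerate excitation law, Eq.~(\ref{36}), and then substituting the definitions of the pressure gradient difference and its spatial variation rate from Eqs.~(\ref{38}) and (\ref{39}). Starting from
$$\rho\frac{\partial u'_i}{\partial t}=\frac{P_i+P'_i}{1+\gamma},$$
I would contract the free index $i$ with a spatial covariant derivative $(\cdot)^{,i}$ — equivalently, following the heuristic in the text, divide both sides by $\Delta x_i$ and let $\Delta x_i\to 0$, which is legitimate precisely because at the excited position $u'_i$ and $P'_i$ both vanish, so the difference quotients converge to genuine spatial derivatives. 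Treating $\rho$ as (locally) constant and $\gamma=1-\beta$ as a constant — consistent with the approximation $\beta\approx 1/Re\approx 0$ adopted just above — and commuting $\partial_t$ with the spatial divergence, the left-hand side becomes $\rho\,\partial_t\!\left(u'^{\bullet,j}_{j,\bullet}\right)=\rho\,\dfrac{\partial\theta'}{\partial t}$, since $\theta'=u'^{\bullet,j}_{j,\bullet}$ by definition.

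For the right-hand side I would handle the two terms separately. The divergence of the characteristic pressure is just $P'^{\bullet,i}_{i,\bullet}$, appearing verbatim in the target equation. For the pressure-gradient-difference term, Eq.~(\ref{38}) gives $P_i=p^{(1)}_{\bullet,i}-p^{(2)}_{\bullet,i}=-\Delta p_{\bullet,i}$, and Eq.~(\ref{39}) identifies the divergence of $\Delta p_{\bullet,i}$ (its spatial rate of change) with the Laplacian $\mathscr{L}p$; hence $P_i^{\,\bullet,i}=-\mathscr{L}p$. Collecting the two contributions and dividing through by $\rho$ gives
$$\frac{\partial\theta'}{\partial t}=\frac{1}{1+\gamma}\left(\frac{P'^{\bullet,i}_{i,\bullet}}{\rho}-\frac{\mathscr{L}p}{\rho}\right),$$
which is exactly Eq.~(\ref{40}); one may then, if desired, expand $P'_i$ through its definition to see the factor $\mu/\rho=\nu$ emerge.

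The main obstacle is justifying the divergence step rigorously. Equation~(\ref{36}) was itself obtained in a $\Delta t\to 0$ difference sense at the single point $B$, so before differentiating in space one must argue that the relation persists on a neighborhood of the excited position (or at least that the relevant difference quotients in the $\Delta x_i$ directions are well defined there), and that the order of the spatial and temporal limits may be interchanged — this is where a smoothness/regularity hypothesis is implicitly needed. A secondary, milder point is keeping $\rho$ and $\gamma$ outside the derivatives, which again relies on the near-constant-density and $\beta\approx 0$ approximations invoked in the previous section, so the statement should be read as holding to that same order of approximation. Once these are granted, the remaining computation is the short substitution chain above.
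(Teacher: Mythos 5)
Your proposal is correct and follows essentially the same route as the paper: the paper likewise obtains Eq.~(\ref{40}) by dividing both sides of Eq.~(\ref{36}) by $\Delta x_i$ and passing to the limit, justifying the convergence of the difference quotients by the fact that $u'_i$ and $P'_i$ develop from zero, and identifying the divergence of $-\Delta p_{\bullet,i}$ with $-\mathscr{L}p$ via Eq.~(\ref{39}). Your added remarks on regularity, the interchange of limits, and the constancy of $\rho$ and $\gamma$ are reasonable caveats that the paper leaves implicit, but the substance of the derivation is the same.
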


Now, we have the last difficulty, which is the mathematical and physical properties of characteristic pressure. In fact, as shown in Fig.~\ref{p2}, the characteristic pressure is caused by the viscous stress of fluctuation velocity. Its divergence could be expressed, that
if $u_k=0$, then $P'^{\bullet,i}_{i,\bullet}=0$;
if $u_k \neq 0$, then 
$$P'^{\bullet,i}_{i,\bullet}=\frac{\partial}{\partial x_i}\left(\mu\left(\phi\frac{\partial C'_{ki}}{\partial t}+\frac{\partial C'_{ik}}{\partial t}\right)\left({u_k}\right)^{-1}\right).$$ 

The velocity strain tensor develops from zero to $\lambda_i^j$. By the difference principle, the time partial derivative of fluctuation velocity could be expanded and we obtain the degenerate form of the excitation law. With $\Delta t \rightarrow 0$ under degradation condition, we could get
$$\frac{\mu}{n+1}\left(\frac{\Delta C'_{ik}}{\Delta t}+n\frac{\Delta C'_{ik}}{\Delta t}\right)=\frac{\mu}{n+1}\left(\frac{\Delta u'_i}{\Delta x^k \Delta t}+\frac{\Delta u'_i\Delta x_j }{\Delta x^k \Delta t \Delta x_j}\right)$$
$$=\frac{\mu}{n+1}\frac{1}{\Delta x^k}\left(\frac{\partial u'_i}{\partial t}+\frac{\partial x_j }{\partial t}\frac{\partial u'_i}{ \partial x_j}\right)=\frac{\mu}{n+1}\frac{a'_i}{\Delta x^k}.$$

In the same way, its similar conjugate tensor
$$\phi\mu\frac{\Delta C'_{ki}}{\Delta t}=\frac{\phi\mu}{n+1}\frac{a'_k}{\Delta x^i}.$$

Therefore, considering that the fluctuation acceleration $a'_i$ is caused by the fluctuation force physically, the fluctuation inertia body force $f'_i$ is introduced, and then
$$a'_i=\frac{f'_i}{\rho}, \quad \frac{u_k}{\Delta x_k}=\frac{u_i}{\Delta x_i}, \quad \Delta x^i\Delta x_i u_k= \Delta x^i u_i\Delta x_k=\Delta x_k\Delta x^i u_i.$$\par
If $x_i$ is regarded as the normal vector of a plane, which is perpendicular to the fluctuation force $f'_i$. Considered   the geometric meaning of the cross product, the normal derivative could be  regarded as the plane derivative, and the plane derivative of the fluctuation force is the isotropic fluctuation pressure, whose value changes from 0, as shown in Fig.~\ref{p2}.
$$\frac{\Delta f'_i}{ \Delta x_i}=\frac{\Delta f'_i}{|\Delta x_\alpha \times \Delta x_\beta| \mathbf{e}_i}=\frac{\Delta f'_i \mathbf{e}^i}{\Delta A}=np',$$
as $\Delta t \rightarrow 0$, combining Eq.~(\ref{2}), the above formulas can be transformed into the following one
$$\frac{\mu}{n+1}\frac{a'_i}{\Delta x^k \Delta x_i}\left(u_k\right)^{-1}=\frac{\mu}{n+1}\frac{\Delta \left(f'_i/ \rho\right)}{\Delta x^k \Delta x_i}\left(u_k\right)^{-1}$$
$$=\frac{\mu}{\rho}\frac{1}{n+1}\frac{\Delta f'_i}{\Delta x^k \Delta x_i}\left(u_k\right)^{-1}+\frac{\mu}{n+1}\frac{f'_i\Delta\left(1/\rho\right) }{\Delta x^k \Delta x_i}\left(u_k\right)^{-1}$$
$$=\frac{\mu}{\rho}\frac{1}{n+1}\frac{\Delta f'_i}{\Delta x^k \Delta x_i}\left(u_k\right)^{-1}-\frac{\mu}{\rho}\frac{1}{n+1}\frac{\Delta f'_i}{\Delta x_i}\frac{1}{\rho}\frac{\Delta \rho }{ \Delta x^k}\left(u_k\right)^{-1}.$$

Consider the following equation
$$\frac{\Delta \rho }{ \Delta x^k}\left(u_k\right)^{-1}=\frac{1}{n+1}\left(\frac{\Delta \rho}{\Delta x^k}+n\frac{\Delta \rho}{\Delta x^k}\right)\left(u_k\right)^{-1}$$
$$=\frac{1}{n+1}\left(\frac{\Delta x^k}{\Delta t}\right)^{-1}\left(\frac{\Delta \rho}{\Delta t}+\frac{\Delta x^j}{\Delta t}\frac{\Delta \rho}{\Delta x^j}\right)\left(u_k\right)^{-1}=\frac{\eta}{n+1}\frac{d \rho}{d t}.$$

Therefore, the above equations could be expressed as:
$$\frac{\mu}{n+1}\frac{a'_i}{\Delta x^k \Delta x_i}\left(u_k\right)^{-1}=\frac{n\nu}{n+1}\frac{\Delta p'}{\Delta x^k}\left(u_k\right)^{-1}-\frac{n\nu \eta p'}{\left(n+1\right)^2}\frac{d \rho}{\rho dt}$$
$$=\frac{n\nu}{n+1} \left(p'_{\bullet,k}\left(u_k\right)^{-1}+\frac{\theta\eta p'}{n+1}\right),$$
and its similar conjugate tensor is
$$\frac{\phi\mu}{n+1}\frac{a'_k}{\Delta x^i \Delta x_i}\left(u_k\right)^{-1}=\frac{\phi\mu}{n+1}\frac{a'_k}{\Delta x^i \Delta x_k}\left(u_i\right)^{-1}=\frac{n\phi\nu p'_{\bullet,i}}{n+1}\left(u_i\right)^{-1}$$
$$-\frac{n\phi\nu\eta p'}{\left(n+1\right)^2}\frac{d \rho}{\rho d t}=\frac{n\phi\nu}{n+1} \left(p'_{\bullet,i}\left(u_i\right)^{-1}+\frac{\theta\eta p'}{n+1}\right);$$
where $\eta=\delta_i^j\left(u_i u^j\right)^{-1}$ is coefficient and $\theta=u_{j,\bullet}^{\bullet,j}$ is velocity expansion.

In addition, by the same principle, the following relationships can be obtained
$$\frac{\mu\Delta C'_{ik}}{\Delta t}=\frac{\mu \Delta a'_i}{\left(n+1\right)\Delta x^k}=\frac{\nu f'_{i,k}}{n+1}-\frac{\nu f'_{i}}{\left(n+1\right)^2}\frac{1}{\rho}\frac{d \rho}{dt}\left(u^k\right)^{-1}$$
$$=\left[\frac{n \nu p'u_i}{n+1}+\frac{\theta \nu f'_{i}}{\left(n+1\right)^2}\right]\left(u^k\right)^{-1},$$
$$\frac{\phi\mu\Delta C'_{ki}}{\Delta t}=\frac{\phi\mu \Delta a'_k}{\left(n+1\right)\Delta x^i}=\frac{\phi\nu f'_{k,i}}{n+1}-\frac{\phi \nu f'_{k}}{\left(n+1\right)^2}\frac{1}{\rho}\frac{d \rho}{dt}\left(u^i\right)^{-1}$$
$$=\left[\frac{n \phi \nu p'u_k}{n+1}+\frac{\phi \theta \nu f'_{k}}{\left(n+1\right)^2}\right]\left(u^i\right)^{-1}.$$

Finally, we could obtain the following relationship.
\begin{theorem}[Characteristic pressure divergence]
Let $v^i=\left(u_i\right)^{-1}$, the characteristic pressure divergence could be expressed as:
\begin{equation}
\label{41}
\begin{split}
P'^{\bullet,i}_{i,\bullet}=\frac{n\nu}{n+1}\left[\left(1+\phi\right)p'_{\bullet,k}v^k-\eta p'\left(u_i v_k+\phi u_k v_i\right)C^{ki}\right]\\
+\frac{\nu\eta\theta}{\left(n+1\right)^2}\left[\left(1+\phi\right)n p'-\left(f'_i v_k+\phi f'_k v_i\right) C^{ki}\right],
\end{split}
\end{equation}
where $n$ is the spatial dimension of the problem and $\nu$ is kinematic viscosity coefficient.
\end{theorem}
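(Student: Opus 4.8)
The plan is to take as starting point the closed form of the characteristic pressure furnished by the Degenerate form of excitation law, $P'_i=\mu\left(\phi\,\partial_t C'_{ki}+\partial_t C'_{ik}\right)\left(u_k\right)^{-1}$, and to compute its divergence $P'^{\bullet,i}_{i,\bullet}=\partial_i P'_i$. Since $u_k=0$ gives $P'_i\equiv 0$ and hence zero divergence, throughout I may assume $u_k\neq 0$ and write $v^k=(u_k)^{-1}$. The core of the computation is already prepared in the displays preceding the statement: expanding $\mu\,\partial_t C'_{ik}$ and $\phi\mu\,\partial_t C'_{ki}$ by the difference principle under the degeneration condition $\Delta x_k/\Delta t\to u_k$ turns each time derivative of a fluctuation-velocity gradient into a material derivative of the fluctuation velocity, i.e.\ a fluctuation acceleration $a'_i=f'_i/\rho$ carrying a $1/(n+1)$ weight; one then substitutes the body force, handles the $\Delta(1/\rho)$ contribution via $\Delta(1/\rho)=-\Delta\rho/\rho^{2}$, converts $\Delta\rho/\Delta x^k\,(u_k)^{-1}$ into $\tfrac{\eta}{n+1}\,d\rho/dt$ with $\eta=\delta_i^{j}(u_iu^j)^{-1}$, eliminates $d\rho/dt$ through the continuity equation~(\ref{2}) (giving $\rho^{-1}d\rho/dt=-\theta$, $\theta=u_{j,\bullet}^{\bullet,j}$), and uses $\mu/\rho=\nu$.

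Next I would feed in the geometric identification suggested by Fig.~\ref{p2}: taking $x_i$ as the normal of the plane orthogonal to the isotropic fluctuation force $f'_i$, the cross-product area element gives $\partial_i f'_i=np'$, which is what produces the explicit factor $n$ and the scalar fluctuation pressure $p'$. With these substitutions the two intermediate identities $\mu\,\partial_t C'_{ik}=\big[\tfrac{n\nu p'u_i}{n+1}+\tfrac{\theta\nu f'_i}{(n+1)^{2}}\big](u^k)^{-1}$ and its $\phi$-weighted conjugate $\phi\mu\,\partial_t C'_{ki}=\big[\tfrac{n\phi\nu p'u_k}{n+1}+\tfrac{\phi\theta\nu f'_k}{(n+1)^{2}}\big](u^i)^{-1}$ are in hand.

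I would then assemble the divergence by applying $\partial_i$ to $P'_i=\big(\phi\mu\,\partial_t C'_{ki}+\mu\,\partial_t C'_{ik}\big)v^k$ with the product rule, so that $\partial_i$ acts both on the bracketed tensors — producing, through $\partial_i p'$ and $\partial_i u_i=\theta$, the terms $p'_{\bullet,k}v^k$ and $np'$ — and on the factor $v^k=(u_k)^{-1}$, whose derivative is proportional to a velocity-gradient contraction $C_{ki}$, producing the asymmetric terms $(u_iv_k+\phi u_kv_i)C^{ki}$ and $(f'_iv_k+\phi f'_kv_i)C^{ki}$. Collecting the unit-weighted and $\phi$-weighted contributions into the common factor $(1+\phi)$ and grouping by the overall power of $1/(n+1)$ then separates the $\tfrac{n\nu}{n+1}$ block (the $p'_{\bullet,k}v^k$ and $\eta p'$ terms) from the $\tfrac{\nu\eta\theta}{(n+1)^{2}}$ block (the $np'$ and $f'C^{ki}$ terms), which is exactly~(\ref{41}).

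The step I expect to be the real obstacle is not any single substitution but the bookkeeping that binds them: keeping the powers of $(n+1)$ consistent through the repeated ``split into $1+n$ pieces and reassemble a material derivative'' manipulations — each pass costs one such factor, and the $\theta$-terms born from the continuity substitution end up quadratic in $1/(n+1)$ — and making sure that after combining the conjugate tensor the two distinct contractions $u_iv_kC^{ki}$ versus $u_kv_iC^{ki}$, and likewise $f'_iv_kC^{ki}$ versus $f'_kv_iC^{ki}$, are attached to the coefficients $1$ and $\phi$ in the correct order. The continuity substitution and the $\partial_i f'_i=np'$ identification are the conceptually delicate moves, but the realistic risk of error here is purely combinatorial.
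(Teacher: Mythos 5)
Your proposal follows essentially the same route as the paper: it starts from the closed form $P'_i=\mu\left(\phi\,\partial_t C'_{ki}+\partial_t C'_{ik}\right)\left(u_k\right)^{-1}$, applies the same difference-principle expansion with the $1/(n+1)$ weights to convert the time derivatives of the fluctuation-velocity gradients into fluctuation accelerations $a'_i=f'_i/\rho$, invokes the same geometric identification $\Delta f'_i/\Delta x_i=np'$, the same $\eta$-substitution and continuity-equation elimination of $d\rho/dt$ in favour of $-\theta$, and arrives at the same two intermediate identities before assembling the divergence. The only cosmetic difference is that you phrase the final assembly as a product rule on the assembled $P'_i$ while the paper carries the extra $\Delta x_i$ through the finite-difference bookkeeping directly, but these are the same computation.
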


The fluctuation pressure $p'_i$ is an important reference variable. Some experiments have been conducted to describe the role of this variable. The pressure distribution of 51 monitored points on the upper and down surfaces of airfoil model (Type: S809) is measured by PSI electronic scanning pressure method, and 27 points by dynamic pressure sensor. 

\begin{figure}[ht]
\includegraphics[scale=0.28,angle=0]{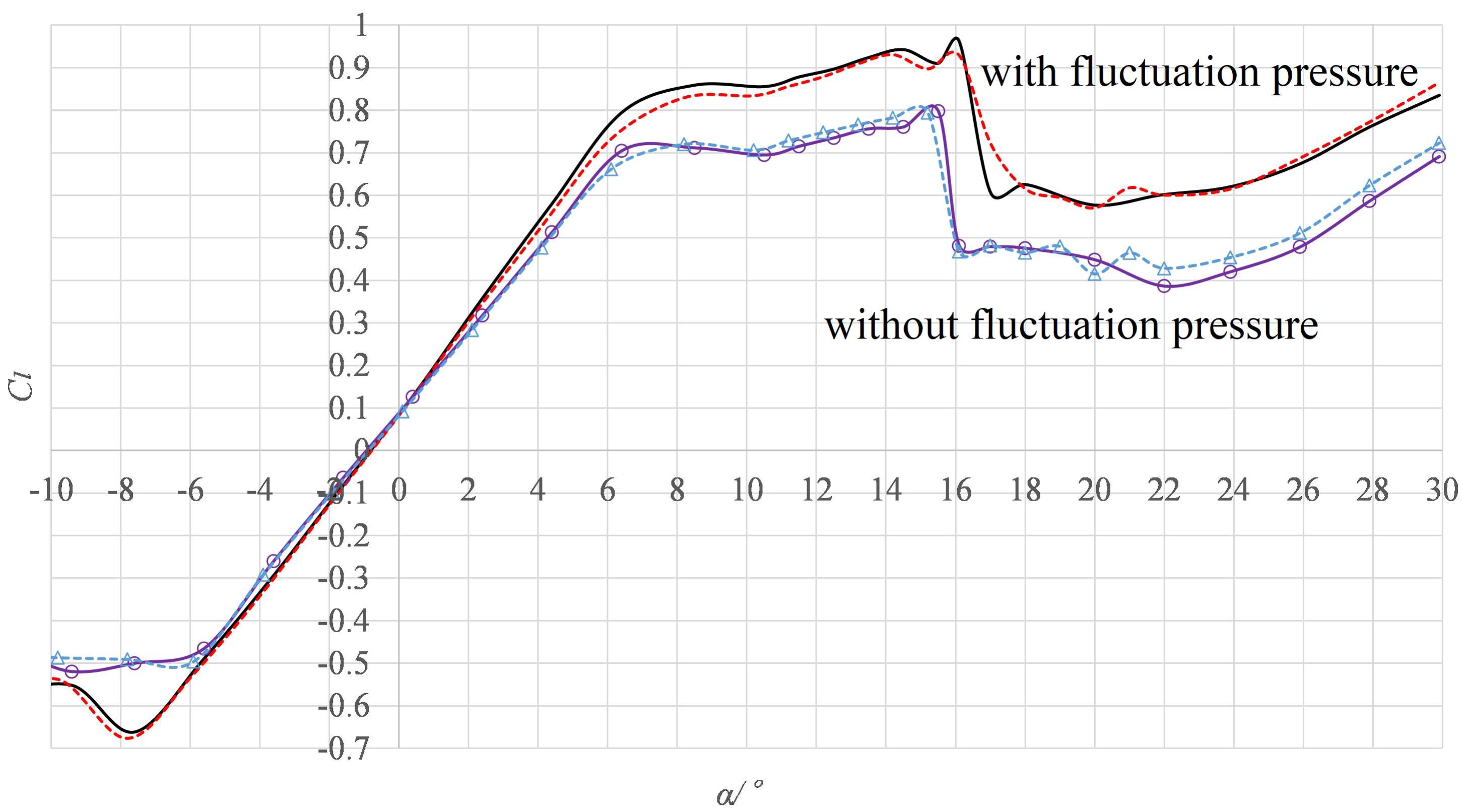}
\caption{\label{p3}Lift coefficient curve of airfoil S809 (The black curve is measured by FL-11 dynamic pressure hole at free transition; the red curve is dynamic pressure hole at fixed transition; the purple curve is static pressure hole at free transition; the green curve is static pressure hole at fixed transition)}
\end{figure}

Obviously, from Fig.~\ref{p3}, the effects of fluctuation pressure near the separation zone are more significant than those in the non-separation zone. Theoretically, transition could occur under any angle of attack. Generally speaking, the effects of fluctuation pressure on transition could be ignored, while the effects on separation are not ignored.

We can draw the following conclusions in excited state fluid mechanics:\par
\begin{enumerate}[(1)]
\item When the pressure does not change, viscosity is necessary for the flowfield excitation. According to  Eq.~(\ref{41}), the occurrence of excited state is closely related to the kinematic viscosity coefficient. The larger the kinematic viscosity coefficient is, the easier the flowfield enters the excited state. For example, because the air kinematic viscosity coefficient is larger than that of water, the flowfield is easier to be excited.
\item When the velocity is great, the flowfield is difficult to be excited. So, the excitation often occurs near the wall. For example, a large number of experimental studies \cite{Croci2019, Melius2018, Chandra2019, Miro2019} have proved that transition, separation, shock wave, reattachment, and other special flow patterns are easy to occur near the wall.
\item The stronger the expansion of the flowfield is, the easier the flowfield is excited when the flowfield velocity does not change. For the incompressible flow, effects of the fluctuation pressure are considered to be close to zero, which could be ignored in Eq.~(\ref{41}).
\item For the compressible flow, let $ \alpha $ be the sound velocity, and the flowfield is approximately one-dimensional  when the velocity and expansion of the flowfield change. Then
$$\eta\theta=\frac{1}{\alpha Ma^2}\frac{dMa}{dx},$$
\begin{enumerate}[1)]
\item If the flow is subsonic, the value of 
$$\left[\frac{dMa}{dx}\right]<<\left[Ma^2\right],$$
where $\left[a\right]$ expresses the influence degree of physical variable $a$. So, it is not easy to be excited;
\item If the flow is supersonic and $Ma$ changes suddenly , the value of 
$$\left[\frac{dMa}{dx}\right]>>\left[Ma^2\right].$$
So, it is easy to induce excitation, such as separation, shock wave, and reattachment. 
\end{enumerate}
In short, for supersonic flow, the characteristic pressure plays an important role, not be ignored.
\end{enumerate} 

Apparently, these conclusions are consistent with many existing ones. But they are not enough to explain the reliability and accuracy of quantitative calculation of degradation condition and the corresponding form. From the perspective of prediction accuracy, it should be
$$\alpha=1-\frac{\partial x_i}{\partial t}\left(u_i\right)^{-1}=1-\frac{1}{m},$$
where the velocity vector could be expressed as
$$\frac{d x_i}{d t}=\frac{\partial x_i}{\partial t}+\left(\frac{\partial x_{j_{1}}}{\partial t}+\left(\frac{\partial x_{j_{2}}}{\partial t}+...\right)\frac{\partial x_{j_{1}}}{\partial x_{j_{2}}}\right)\frac{\partial x_{i}}{\partial x_{j_{1}}}=m\frac{\partial x_i}{\partial t}.$$

Theoretically, when $m=1$, the prediction error under the degradation condition is 0; when $m \rightarrow \infty$, there is a great uncertainty in the infinitesimal and infinital domains. Therefore, as a theoretical study, it can be considered that the degradation condition is equivalent to the basic theory of excited state fluid mechanics. As an application fundamental theory in engineering practice, it is suggested to solve the optimization model or degradation condition through a large number of engineering practices in order to overcome the error caused by a large number of uncertainties.

Thus far, we have established the excited state theory of a flowfield through mathematical analysis. We now need to describe and analyze specific problems according to the definitions of separation and transition.

\section{\label{sec:level7}Theorems of transition and separation}
Transition and separation are two special cases in the excited state. One is the generation of an irregular fluctuation velocity, and the other one is the generation of reflux near the boundary layer. Therefore, we use the excitation law to derive the laws of transition and separation.\par
Transition represents irregular fluctuation velocity, and the initial value will be very small. Separation is to form a local fluctuation velocity distribution of opposite magnitude, such as Newton's internal friction velocity distribution, which could counteract the original flow velocity or produce backflow. Therefore, the effects of viscosity on separation will be much stronger than that of transition.\par

Usually, for transition, a large number of experimental investigations have proved in the process of laminar flow to turbulence, they are $u'_i \rightarrow 0$, $f'_i \rightarrow 0$, $p' \rightarrow 0$, and $p'_{\bullet,k} \rightarrow 0$. Therefore, it could be considered that
$$P'^{\bullet,i}_{i,\bullet} \rightarrow 0,$$
this is only the normal situation. In extreme cases, Eq.~(\ref{41}) shall prevail.

For separation, experimental results show that the velocity is in the state of excited state completion. So, the local velocity gradient is $C^{ki} \rightarrow 0$. Therefore, it could be considered that
$$P'^{\bullet,i}_{i,\bullet} \rightarrow \frac{n\nu \left(1+\phi\right)}{n+1}\left(p'_{\bullet,k}v^k+\frac{\theta\eta p'}{n+1}\right),$$
again, this is only the normal situation. Under extreme cases, Eq.~(\ref{41}) shall prevail as well, especially for the  incompressible flow.

\begin{theorem}[Transition]
Let $U\left(x_h,\epsilon\right)$ be a neighborhood of point $x_h \in \Omega$. The laminar flow with pressure $p \in \mathscr{T}_{M}$ runs through the neighborhood, and excites a transition by various disturbances, then 
\begin{equation}
\label{44}
\mathscr{L}p \left(x_h\right) \neq 0, \quad \mathscr{L}p \left(x_k\right) = 0, \quad \forall x_k \in U\left(x_h,\epsilon\right),
\end{equation}
where $\mathscr{L}$ is the Laplace operator. All points $x_j$ satisfying the above equation constitute a set $K \subset \Omega$, which denotes the transition position.
\end{theorem}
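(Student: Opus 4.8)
The plan is to read the statement as the specialization of the excitation law, Theorem (Finally degenerate form of excitation law) / Eq.~(\ref{40}), to the transition scenario, combining the onset conditions for the fluctuation velocity with the reduction $P'^{\bullet,i}_{i,\bullet}\to 0$ recorded immediately before the statement. First I would locate the onset point: by Definition (Transition) the transition point obeys ${u'}_j\left(x_h\right)=0$, which is precisely Eq.~(\ref{8}) of Theorem (Conditions for generating fluctuation velocity); hence in the limit $\epsilon\to 0$ the field ${u'}_j$ vanishes on the shrinking neighborhood $U\left(x_h,\epsilon\right)$ while $d{u'}_j/dt\neq 0$, and since the spatial derivatives of the locally zero field ${u'}_j$ also vanish we get $d{u'}_j/dt=\partial{u'}_j/\partial t$, so the fluctuation expansion $\theta'=u'^{\bullet,j}_{j,\bullet}$ satisfies $\theta'\left(x_h\right)=0$ but $\partial\theta'/\partial t\neq 0$ at $x_h$ --- this is the ``$u'_i$, hence $\theta'$, varies from $0$'' property invoked just above Eq.~(\ref{40}).

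Next I would feed the transition reduction into the excitation law. For a transition the text records $u'_i,f'_i,p',p'_{\bullet,k}\to 0$, hence $P'^{\bullet,i}_{i,\bullet}\to 0$, so the characteristic-pressure term in Eq.~(\ref{40}) drops out and the law collapses to
\begin{equation}\nonumber
\frac{\partial\theta'}{\partial t}=-\frac{1}{1+\gamma}\,\frac{\mathscr{L}p}{\rho}.
\end{equation}
Evaluating this at $x_h$ and using $\partial\theta'/\partial t\neq 0$ from the first step forces $\mathscr{L}p\left(x_h\right)\neq 0$, which is the first half of Eq.~(\ref{44}). For the second half, take $x_k\in U\left(x_h,\epsilon\right)$ that is not itself a transition point: as $\epsilon\to 0$ either $x_k$ is still laminar, so ${u'}_j\equiv 0$ on a neighborhood and $\partial\theta'/\partial t=0$; or $x_k$ already belongs to the excited (turbulent) region, in which case --- treating the excited state as a genuine state, so that $\theta'$ is that of a settled configuration --- one again has $\partial\theta'/\partial t=0$ together with $P'^{\bullet,i}_{i,\bullet}\to 0$. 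In both cases the collapsed equation gives $\mathscr{L}p\left(x_k\right)=0$. Gathering the points at which $\mathscr{L}p$ is nonzero while vanishing on a punctured neighborhood then yields the set $K\subset\Omega$, and by the two computations above $K$ coincides with the transition position.

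The main obstacle is the exactness of the second step's reduction together with the precise sense of the surrounding-point claim: the paper itself flags $P'^{\bullet,i}_{i,\bullet}\to 0$ as ``only the normal situation'' (Eq.~(\ref{41}) being the general case), so $\mathscr{L}p\left(x_k\right)=0$ is really an asymptotic identity as $\epsilon\to 0$, and one must pin down the topology and the uniformity in which the onset conditions Eqs.~(\ref{8}) and (\ref{9}) and the limit $P'^{\bullet,i}_{i,\bullet}\to 0$ are taken. A secondary technicality is showing that the divergence of $\partial{u'}_j/\partial t$ at $x_h$ is genuinely nonzero rather than accidentally cancelling; the cleanest way around this is to read $\partial\theta'/\partial t\neq 0$ directly from the ``varies from $0$'' property of $u'_i$ and Eq.~(\ref{40}), rather than differentiating Eq.~(\ref{9}) by hand.
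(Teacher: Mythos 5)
Your proposal follows the same skeleton as the paper's proof --- invoke Definition 2 and the onset conditions of Theorem 1 at $x_h$, drop the characteristic pressure via $P'^{\bullet,i}_{i,\bullet}\to 0$, and read off $\mathscr{L}p$ from Eq.~(\ref{40}) --- but your key intermediate assignment is the \emph{reverse} of the paper's. The paper's proof displays
$\left(\frac{du'_{j,\bullet}}{dt}\right)^{\bullet,j}\left(x_k\right)\neq 0$ and $\left(\frac{du'_{j,\bullet}}{dt}\right)^{\bullet,j}\left(x_h\right)=0$
(arguing from Definition 2 that ``there are sources and sinks everywhere'' in $U\left(x_h,\epsilon\right)$ except at $x_h$), and then asserts the theorem's conclusion; taken literally with Eq.~(\ref{40}) and $P'\to 0$, that intermediate display would actually yield $\mathscr{L}p\left(x_h\right)=0$ and $\mathscr{L}p\left(x_k\right)\neq 0$, the negation of Eq.~(\ref{44}). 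You instead claim $\partial\theta'/\partial t\left(x_h\right)\neq 0$ and $\partial\theta'/\partial t\left(x_k\right)=0$, which is the assignment that is consistent with Eq.~(\ref{40}) and the stated conclusion --- so in effect you have repaired the step the paper glosses over, at the price of contradicting the paper's own intermediate claim.

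The genuine gap is in your justification of $\partial\theta'/\partial t\left(x_k\right)=0$ for the surrounding points. Definition 2 requires the fluctuation velocity to be nonzero and irregular (different at distinct points) throughout $U\left(x_h,\epsilon\right)\setminus\{x_h\}$; the paper explicitly uses this to argue the divergence field is \emph{nonzero} at every such $x_k$. Your dichotomy ``still laminar or already settled'' imports a notion of a settled excited state with time-independent $\theta'$ that appears nowhere in the paper and sits uneasily with the randomness built into the transition definition, so this step needs an independent argument (or at least the paper's own hypothesis $u_i\left(x_s\right)=u_i\left(x_k\right)$, which the paper invokes at exactly this point and you do not use). The secondary issue you flag yourself --- that $du'_j/dt\neq 0$ at $x_h$ does not by itself give a nonzero divergence of $\partial u'_j/\partial t$, since a nonvanishing vector field can be divergence-free --- is real and is not resolved by either your argument or the paper's.
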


\begin{proof} By definition 2, $\forall x_k \in U\left(x_h,\epsilon\right)$ and $\exists x_s \in U\left(x_h,\epsilon\right)$ such that
$$u'_j\left(x_k\right) \neq u'_j\left(x_s\right),\ u'_j\left(x_h\right)=0.$$

Thus, no matter how $x_k$ changes, $u'_j$ is not equal to zero. In this case, the field $u'_j$ could not be regarded as an isolated system, and there are sources and sinks everywhere in the field. Therefore, the divergence field of the fluctuation velocity is not zero, except the point $x_h$.
$$u'^{\bullet,j}_{j,\bullet}\left({x_k}\right) \neq 0, \ u'^{\bullet,j}_{j,\bullet}\left(x_h\right)=0, \ \forall x_k \in U\left(x_h,\epsilon\right).$$

From the above discussions, $u'_j$ develops from zero in a very short time. So, we could write
$$\left(\frac{du'_{j,\bullet}}{dt}\right)^{\bullet,j}\left({x_k}\right) \neq 0, \ \left(\frac{du'_{j,\bullet}}{dt}\right)^{\bullet,j}\left({x_h}\right)=0, \ \forall x_k \in U\left(x_h,\epsilon\right).$$

$p'$ is ignored. A combination of Eq.(\ref{40}) and $u_i\left(x_s\right)=u_i\left(x_k\right)$, $\forall x_k \in U\left(x_h,\epsilon\right)$, we have that
$$\mathscr{L}p \left(x_h\right) \neq 0, \quad \mathscr{L}p \left(x_k\right) = 0.$$

In this way, we prove the necessity of the theorem. Similarly, we prove its sufficiency as well.
\end{proof}

\begin{theorem}[Separation]
Consider a scalar pressure field $p$, and a velocity vector field $u_i \in \mathscr{T}_M$. Expansion $\theta$ presents velocity divergence $u^{\bullet,i}_{i,\bullet}$. If the separation is excited near the wall, there exists a set of solutions $x_j$, satisfying the following equation:
$$\frac{\partial \theta}{\partial t}=\frac{1}{1+\gamma}\left(\frac{\mathscr{L}p}{\rho}-\frac{n\nu \left(1+\phi\right)}{n+1}\left(\frac{p'_{\bullet,k}}{\rho}v^k+ \frac{p'}{\rho}\frac{\theta\eta}{n+1}\right)\right),$$ 
then, all points $ x_j \in K$ satisfying the above equation constitute a set $K$ which denotes the separation position.
\end{theorem}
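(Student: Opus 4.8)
The plan is to follow the same three-stage route used in the proof of the Transition theorem, but now retaining the characteristic-pressure contribution, since for separation the viscous action of the fluctuation field cannot be discarded. First I would invoke Definition 3: in the neighborhood $U\left(x_h,\epsilon\right)$ the excited fluctuation field obeys $u_i\left(x_k\right)u'_j\left(x_k\right)\le 0$ and $\left|u_i\left(x_k\right)\right|\le\left|u'_j\left(x_k\right)\right|$ with $u'_j\left(x_h\right)=0$, i.e. it is a reverse, Newton-friction-type distribution that cancels or overcomes the oncoming flow near the wall. The content to extract from this is that at the separation point the mean flow has already relaxed into its excited equilibrium ("excited state completion"), so the local mean-velocity gradient degenerates, $C^{ki}\to 0$, which is precisely the regime flagged in the paragraph preceding the statement.

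Next I would substitute $C^{ki}\to 0$ into the Characteristic pressure divergence theorem, Eq.~(\ref{41}). Every term of that identity that is contracted with $C^{ki}$ drops out, and what survives is the limiting form already recorded just above the theorem,
$$P'^{\bullet,i}_{i,\bullet}\ \longrightarrow\ \frac{n\nu\left(1+\phi\right)}{n+1}\left(p'_{\bullet,k}v^k+\frac{\theta\eta p'}{n+1}\right),$$
which collapses to zero only in the incompressible limit $\theta\to 0$, $p'\to 0$. Then I insert this expression into the Finally degenerate form of the excitation law, Eq.~(\ref{40}), namely $\partial\theta'/\partial t=\frac{1}{1+\gamma}\bigl(P'^{\bullet,i}_{i,\bullet}/\rho-\mathscr{L}p/\rho\bigr)$, which immediately gives $\partial\theta'/\partial t=-\frac{1}{1+\gamma}\bigl(\mathscr{L}p/\rho-\frac{n\nu(1+\phi)}{(n+1)\rho}(p'_{\bullet,k}v^k+\frac{\theta\eta p'}{n+1})\bigr)$.

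The remaining step is the sign bookkeeping between the fluctuation expansion $\theta'=u'^{\bullet,j}_{j,\bullet}$ in Eq.~(\ref{40}) and the velocity expansion $\theta=u^{\bullet,j}_{j,\bullet}$ in the statement. Because the separation fluctuation field is, by Definition 3, oriented opposite to the mean flow near the wall (the backflow condition $u_iu'_j\le 0$), the divergence it contributes enters the instantaneous change of $\theta$ with reversed sign; equivalently, using $u_i=\bar u_i+u'_i$ together with the unexcited continuity balance, one obtains $\partial\theta/\partial t=-\partial\theta'/\partial t$ at the separation point. Applying this to the previous display yields exactly
$$\frac{\partial \theta}{\partial t}=\frac{1}{1+\gamma}\left(\frac{\mathscr{L}p}{\rho}-\frac{n\nu \left(1+\phi\right)}{n+1}\left(\frac{p'_{\bullet,k}}{\rho}v^k+ \frac{p'}{\rho}\frac{\theta\eta}{n+1}\right)\right),$$
and the set $K$ of all $x_j$ solving this equation in a wall neighborhood is, by Definition 3, the separation position; necessity and sufficiency follow as in the Transition proof.

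I expect the genuine obstacle to be this last identification rather than any computation. Both the reduction $C^{ki}\to 0$ and the sign relation $\partial\theta/\partial t=-\partial\theta'/\partial t$ are, within the present framework, justified from the phenomenology encoded in Definition 3 and from the experimental picture of separation, not derived from the Navier--Stokes system directly; so the real work of the proof is in making those two reductions precise and delimiting the wall neighborhood on which they hold, after which the result is a substitution into identities already established as Eqs.~(\ref{40}) and~(\ref{41}).
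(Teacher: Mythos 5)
Your proposal follows essentially the same route as the paper's own proof: invoke Definition 3 to get the cancellation relation $u_i\left(x_k\right)=-u'_j\left(x_k\right)$ near the wall (hence the sign flip $\partial\theta/\partial t=-\partial\theta'/\partial t$), drop the mean-velocity-gradient terms $C^{ki}\to 0$ in Eq.~(41) to obtain the reduced characteristic-pressure divergence, and substitute into Eq.~(40). You correctly identify that the two reductions are phenomenological inputs rather than consequences of the equations, which is exactly where the paper's own (much terser) argument also leans.
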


\begin{proof}
By definition 3, $U\left(x_h,\epsilon\right)$ is a neighborhood of point $x_h \in \Omega$. All $ x_k \in U \left(x_h,\epsilon\right)$ satisfy
$$u_i\left(x_k\right)=-{u'}_j\left(x_k\right), \ {u'}_j\left(x_h\right) = 0.$$

According to Eq.~(\ref{41}), we obtain
$$\rho \frac{\partial u_{i,\bullet}^{\bullet,i}}{\partial t}=\frac{\mathscr{L}p-P'^{\bullet,i}_{i,\bullet}}{1+\gamma}.$$

After a simple mathematical rearrangement, we obtain the final result, considering that the velocity gradient could be ignored during separation, $\forall x_j \in K$
$$\frac{\partial \theta}{\partial t}=\frac{1}{1+\gamma}\left(\frac{\mathscr{L}p}{\rho}-\frac{n\nu \left(1+\phi\right)}{\rho\left(n+1\right)}\left(p'_{\bullet,k}v^k+\frac{\theta\eta p'}{n+1}\right)\right).$$ 

In this way, we have proved the necessity of the theorem. Similarly, we could prove its sufficiency too.
\end{proof}

In addition, the wall condition is an important factor in continuous stable fluctuation flowfield, such as separation. The streamlines cannot pass through a solid. Therefore, for an continuous stable excited flowfield, the fluctuation acceleration direction is along the outer normal vector of the wall. According to Eq.~(\ref{39}), the pressure gradient difference direction can only follow the direction of the outer normal vector on the wall, and we could get the following theorem.\par

\begin{theorem}[Wall condition]
The flowfield near the wall is excited. Then, let scalar pressure is $p$, and character pressure is $P'_i$, which should satisfy the following relation near the key point:\par
if $n^i>0$, then
\begin{equation}\nonumber
\mathscr{L}p - P'^{\bullet,i}_{i,\bullet}<0;
\end{equation}

conversely, then
\begin{equation}
\label{43}
\mathscr{L}p - P'^{\bullet,i}_{i,\bullet}\geqslant0,
\end{equation}
where $n^i$ is the outer normal vector of the solid boundary.
\end{theorem}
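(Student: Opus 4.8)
The plan is to treat the wall condition as a sign analysis of the degenerate excitation law, Eq.~(\ref{40}), specialised to a continuous stable excited flowfield (separation being the model case) adjacent to a solid boundary. Since $\theta'=u'^{\bullet,j}_{j,\bullet}$ is the divergence of the fluctuation velocity and time differentiation commutes with the spatial divergence, $\partial\theta'/\partial t=\left(\partial u'_i/\partial t\right)^{\bullet,i}$. Multiplying Eq.~(\ref{40}) through by $\rho(1+\gamma)$ gives the identity
\begin{equation}\nonumber
\mathscr{L}p-P'^{\bullet,i}_{i,\bullet}=-\rho\left(1+\gamma\right)\frac{\partial\theta'}{\partial t},
\end{equation}
so that, because $\rho>0$ and (by Eq.~(\ref{37}) with $\beta\approx 0$, hence $\gamma\approx 1$) $1+\gamma>0$, the sign of $\mathscr{L}p-P'^{\bullet,i}_{i,\bullet}$ is exactly opposite to that of $\partial\theta'/\partial t$ at the key point. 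It therefore suffices to show that $\partial\theta'/\partial t>0$ precisely when $n^i>0$.

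First I would fix the direction of the fluctuation acceleration. By the theorem on the conditions for generating fluctuation velocity (Eqs.~(\ref{8})--(\ref{9})), at the key point $u'_i=0$ while $\mathrm{d}u'_i/\mathrm{d}t\neq 0$; because the excited field is continuous and stable and the streamlines cannot pass through the solid, this acceleration can carry no component directed into the wall, so it must lie along the outer normal, $\mathrm{d}u'_i/\mathrm{d}t=a'\,n_i$ with $a'>0$. Since $u'_i=0$ there the convective term vanishes, $\partial u'_i/\partial t=a'n_i$, and hence $\partial\theta'/\partial t=\left(a'n_i\right)^{\bullet,i}$ is the divergence of an outward-pointing field anchored at the key point; near the wall $u'_i$ varies essentially only in the normal direction, so this divergence is controlled by the normal component and inherits the sign of $n^i$. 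Making this precise with Gauss' theorem applied to a vanishing pillbox straddling the wall — zero flux through the solid face, outward flux proportional to $a'>0$ through the fluid face on the side where $n^i>0$, exactly as in Eq.~(\ref{27}) — gives $\partial\theta'/\partial t>0$ for $n^i>0$ and $\partial\theta'/\partial t\leqslant 0$ otherwise. Substituting into the displayed identity yields Eq.~(\ref{43}) and its companion strict inequality, and the converse direction follows by reversing the implications, as in the transition and separation theorems.

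The main obstacle is this second step: the assertion that the fluctuation acceleration is purely normal, and that the sign of its divergence is governed by $n^i$, is a physical statement about the impermeable boundary and about the fluctuation field being ``born'' at the key point, rather than a purely formal deduction, so making it rigorous demands care about the precise meaning of ``near the key point'' and about the orientation of the Gauss pillbox relative to $n^i$. The remaining ingredients — the rearrangement of Eq.~(\ref{40}), the positivity of $\rho(1+\gamma)$, and the commutation of $\partial_t$ with the divergence — are routine.
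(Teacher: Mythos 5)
Your proposal is correct and takes essentially the same route as the paper, which supplies no formal proof for this theorem but only the informal remark preceding it: wall impermeability forces the nascent fluctuation acceleration along the outer normal, and the degenerate excitation law then converts that directional statement into the sign condition on $\mathscr{L}p - P'^{\bullet,i}_{i,\bullet}$. Your rearrangement of Eq.~(\ref{40}) and the Gauss pillbox merely make explicit what the paper leaves implicit, and the physical leap you flag --- that the acceleration is purely normal and that the sign of its divergence is governed by $n^i$ --- is precisely the unargued step the paper itself relies on.
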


For continuous stable fluctuation flowfield, another important property is that the wall condition is satisfied continuously. If it is not met with the wall condition continuously, it is easy to be affected by the wall and make the flowfield return to the previous state. The transition does not belong to the fluctuation  continuous stable flowfield; so, the transition does not need to consider the wall condition. However, if the transition occurs near the wall, the transition should be regarded as continuous stable flow. Note that when the wall condition is employed,  pressure term in Eq.~(\ref{43}) is not the pressure gradient, but the gradient difference, which is  different in nature and cannot be confused.\par
In excited state fiuld mehanics, some of fluctuation velocity divergence means the source, and some means the sink. This is not important for transition, but crucial for separation. The reason is that separation needs to produce necessary conditions to cancel the original velocity field.\par

\begin{theorem}[Divergence condition]
If the excited flowfield needs to cancel the original velocity field, let scalar pressure is $p$, and character pressure is $P'_i$, which should satisfy the following relation near the key point:
\begin{equation}
\label{44}
\mathscr{L}p - P'^{\bullet,i}_{i,\bullet}<0,
\end{equation}
where $n^i$ is the outer normal vector of the solid boundary.
\end{theorem}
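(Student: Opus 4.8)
The plan is to turn the claimed inequality into a statement about the sign of the fluctuation velocity expansion $\theta'=u'^{\bullet,j}_{j,\bullet}$, using the Finally degenerate form of the excitation law, Eq.~(\ref{40}), as the engine. Multiplying that identity through by $\rho\left(1+\gamma\right)$ and transposing gives
\begin{equation}\nonumber
\mathscr{L}p-P'^{\bullet,i}_{i,\bullet}=-\rho\left(1+\gamma\right)\frac{\partial \theta'}{\partial t},
\end{equation}
so the target $\mathscr{L}p-P'^{\bullet,i}_{i,\bullet}<0$ is \emph{equivalent} to $\partial_t\theta'>0$ once the prefactor is shown positive. First I would dispose of the prefactor: $\rho>0$ always, and with $\gamma=1-\beta$ and $\beta\approx Re^{-1}$ small for the common media treated here, one has $1+\gamma=2-\beta>0$. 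Thus the left-hand side carries exactly the sign opposite to $\partial_t\theta'$.

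The core step is to show that cancelling the original velocity field forces $\partial_t\theta'>0$. Here I would feed in the Separation definition, Eq.~(\ref{7}), together with the source/sink reading of divergence stated just before the theorem. To annul the attached flow and set up backflow, the excited fluctuation field must oppose and overpower the mean velocity ($u_iu'_j\le 0$, $|u_i|\le|u'_j|$) while growing from $u'_j\left(x_h\right)=0$. Physically this is the lift-off of the dividing streamline: near-wall fluid that was convected tangentially is turned along the outer normal and ejected away from the solid, so the near-wall region behaves as a local \emph{source} of fluctuation velocity. Since $\theta'$ departs from zero (the unexcited state) and increases as this source strengthens, its time derivative at the key point is strictly positive, $\partial_t\theta'>0$; substituting back yields $\mathscr{L}p-P'^{\bullet,i}_{i,\bullet}<0$, and rerunning the neighborhood argument in reverse pins down the set $K$ as the separation position.

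As an internal check I would cross-reference the Wall condition theorem: ejection along the outer normal is precisely its $n^i>0$ branch, which already asserts $\mathscr{L}p-P'^{\bullet,i}_{i,\bullet}<0$, so the present statement is the specialization of the wall condition to the separating side (this also explains why the theorem records $n^i$ even though it does not appear in the inequality). The hard part will be justifying the \emph{strict} positivity $\partial_t\theta'>0$ rather than a weak inequality: the identity above is by itself sign-agnostic, and everything rests on reading ``cancellation'' as an outward (source) rearrangement of the fluctuation field, not an inward (sink) one. I would guard against a sign slip by fixing orientation once, at the level of the physical picture---fluid leaving the wall means positive divergence---and then letting the explicit minus sign in the rearranged Eq.~(\ref{40}) perform the flip to the required $<0$, rather than smuggling in any ad hoc convention at the final step.
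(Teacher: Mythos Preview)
Your proposal is essentially correct and follows the same line of reasoning the paper uses. The paper does not give a formal proof of this theorem; it offers only the one-sentence justification immediately preceding the statement---``some of fluctuation velocity divergence means the source, and some means the sink\ldots separation needs to produce necessary conditions to cancel the original velocity field''---and then asserts Eq.~(\ref{44}). Your argument is a faithful and considerably more explicit unpacking of that sentence: you invoke Eq.~(\ref{40}) to convert the inequality into a sign condition on $\partial_t\theta'$, identify cancellation with the source branch, and close with the cross-check against the $n^i>0$ case of the Wall condition. All of this is consistent with the paper's intent, and your caveat about the strict inequality resting on the physical source/sink reading is exactly the point the paper leaves implicit.
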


In summary, we have established the fundamental theories of separation and transition under general circumstances, but they are still not conducive to engineering applications, especially for active or passive flow control technology. The theories have become more common in the aerospace field in recent years. According to the excitation law, we now define a new physical variable by considering the divergence of a fluctuation acceleration field. This variable is called the  flowfield excitation intensity.

\begin{definition}[Flowfield excitation intensity] 
For a pressure field $p$, if the system is excited, there is a scalar physical variable $\Pi$ that satisfies
\begin{equation}
\label{45}
\Pi =\frac{1}{1+\gamma}\left(\frac{\mathscr{L}p}{\rho}-\frac{P'^{\bullet,i}_{i,\bullet}}{\rho}\right),
\end{equation}
where $\mathscr{L}$ is the Laplace operator; $\rho$ is the fluid density; $\Pi$ is the flowfield excitation intensity.
\end{definition}

The goal of active or passive flow control is to eliminate the excitation intensity at the initial point of the excited state. In addition, a turbulent flow can be transformed into a laminar flow using this physical variable. As long as the excitation intensity of each point in the flowfield is known, it is possible to offset the excitation intensity.

\section{\label{sec:level8}Experimental validation and verification of theories and discussions on correlation analysis methods}
Introducing this relationship into Eq.~(\ref{44}), we find that the transition is only related to the second derivative of pressure.
For transition, the following relationship should be satisfied,
$$\frac{d^2 C_p}{dx^2}=0;$$
for separation, the following one should be met,\par
$$\frac{d^2 C_p}{dx^2}=\chi,$$
where $C_p$ is the pressure coefficient, $x$ is the separation or transition positions along the chord of the airfoil, $\epsilon \rightarrow 0$  is infinitesimal, and $n_x$ is the component of the outer normal vector of the upper or lower surfaces of the airfoil on the $x$ axis. The coefficient $\chi$ is expressed
$$\chi=\frac{n}{n+1}\left(1+\phi\right)\frac{\nu C'_p}{u_{max}}, \quad \frac{u}{u_{\infty}}\approx h\left(\frac{y}{\delta}\right),$$
where  $u_{\infty}=30m/s$ is the inflow velocity, $n=1$ is the spatial dimension, and the mapping $h$ is exprssed as the power function space, such as quadratic function space, cubic function space, etc. If the product of fluctuation pressure and dynamic viscosity is insignificant, then $\chi \rightarrow 0$. 
Considering that the superposition state is one state which is in the separation of instantaneous its dynamic equilibrium. Ignoring the air gravity, and according to d'Alembert's principle $p'\approx p_{\infty}$, the following equation holds.
$$\frac{C'_p}{C_p} \approx \frac{u^2_{\infty}}{u^2_{max}}.$$\par
Due to the wall and divergence conditions, the separation position should be met the following conditions
$$n^i>0, \quad \frac{d^2 C_p}{dx^2}\left(x-\epsilon\right)<0;$$
the transition position should meet the following conditions.
If $n^i>0$, then
\begin{equation}\nonumber
\frac{d^2 C_p}{dx^2}\left(x-\epsilon\right)<0;
\end{equation}
Conversely, then
\begin{equation}\nonumber
\frac{d^2 C_p}{dx^2}\left(x-\epsilon\right)\geqslant0,
\end{equation}
where $n^i$ is the outer normal vector of the solid boundary. For the airfiol NACA2412, when the point is at the position which is less than the maximum bending position, the normal vector is regarded as negative $n^i<0$; contrarily, the normal vector is regarded as positive $n^i>0$.\par
Experiments are the final verification of all physical theory results. Therefore, we will employ several simple one-dimensional flow experiments to reveal the applicability of the separation and transition theory. Because it is the one-dimensional flow, the sum of all physical variables has one degree of freedom. In the experiment, an airfoil NACA2412  with an inflow velocity of 30 m/s was used. To analyze the transition and separation, let its angle of attack be $12^{\circ} $ and $-2^{\circ} $. A diagram of the experimental device is shown in Fig.~\ref{p4}.

In the first case, the angle of attack is $-2^{\circ}$. Product of the fluctuation pressure and dynamic viscosity is insignificant; so, it can be considered $\delta \rightarrow 0$. The pressure curve and the corresponding second derivative curve are shown in the Figs.~{\ref{p5} and \ref{p6}}. The experimental data are shown in Fig~.\ref{p7}. The initial transition position is $x=0.405$ and the full transition position is $x=0.502$. Region C is an error due to the viscosity of the fixed wall, which is equivalent to an infinite straight wing without wall for the airfoil used in the prediction. This error should be ignored.
\begin{figure}[ht]
\includegraphics[scale=0.39,angle=270]{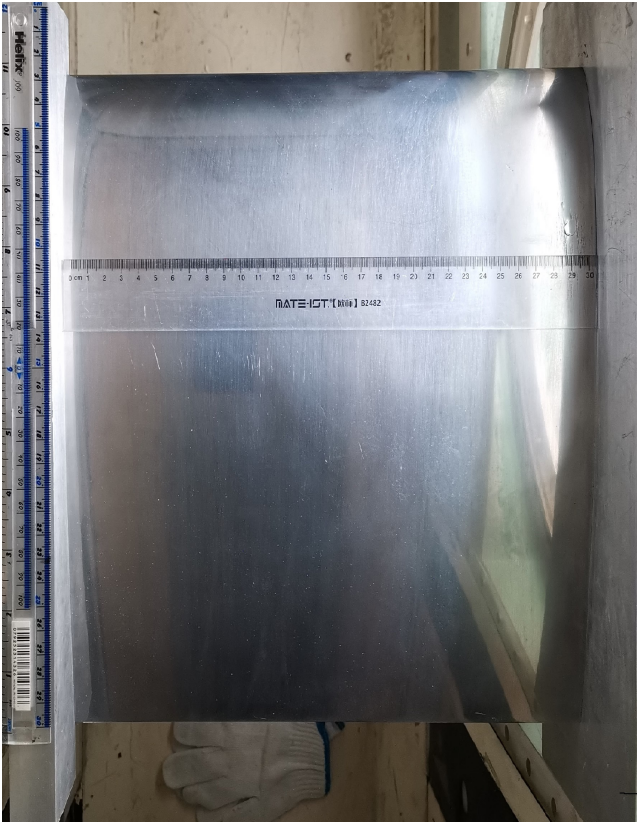}
\caption{\label{p4}Installation and fixation of airfoil NACA2412.}
\end{figure}\par

\begin{figure}[ht]
\includegraphics[scale=0.32,angle=0]{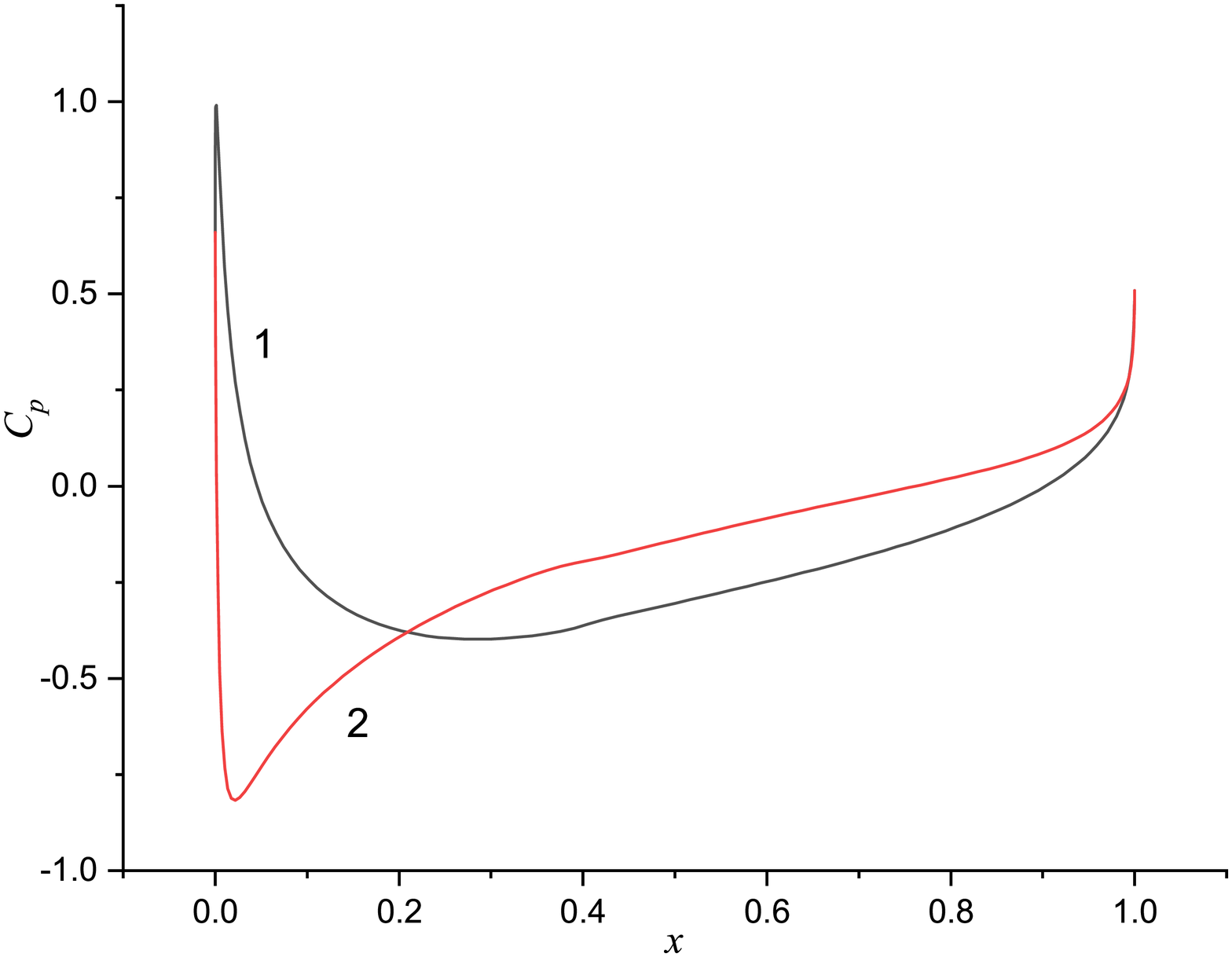}
\caption{\label{p5}Pressure curve of NACA2412 with attack angle of $-2^{\circ}$ (1 is the upper surface of the airfoil and 2 is the lower surface.)}
\end{figure}

\begin{figure}[ht]
\includegraphics[scale=0.32,angle=0]{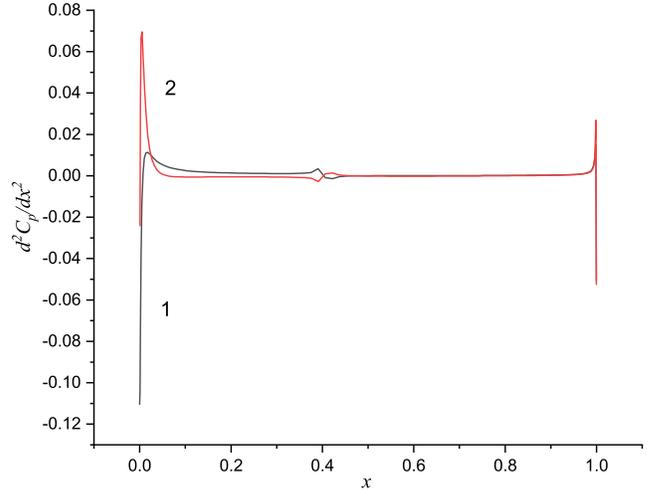}
\caption{\label{p6}Second derivative curve of pressure of NACA2412 with attack angle of $-2^{\circ}$. (1 is the upper surface of the airfoil and 2 is the lower surface.)}
\end{figure}

Observe curve 2 (lower surface) in Fig.~\ref{p6}. When the normal vector component $n^i$ is greater than $0$, the second derivative of the pressure curve of lower surface is greater than $0$ before the second derivative value equals $0$; when it is less than $0$, it has the contrary variations. There is no separation or transition occurence on the lower surface. Similairly, observe curve 1 (upper surface) in Fig.~\ref{p6}. When the normal vector component $n^i$ is less than $0$, there is no point to let the second derivative equal to $0$ and before it, the second derivative is greater than 0. There is no separation or transition on the lower surface; when it is greater than $0$, the second derivative of the pressure curve of lower surface is greater than $0$ before the second derivative value equals $0$. Here, there are both separation and transition effects. But when the separation and transition occur at the same time, the irregularity of the chaotic fluctuation velosity excited by the transition can cause the fluid to remain on the boundary by the definition 3. From the perspective of traditional fuild mechanics, the separation does not occur and there is only transition here, seen in Fig.~\ref{p7} that the predicted value is $x=0.403$ with transition and $x=0.508$ with full transition.

\begin{figure}[ht]
\includegraphics[scale=0.33,angle=0]{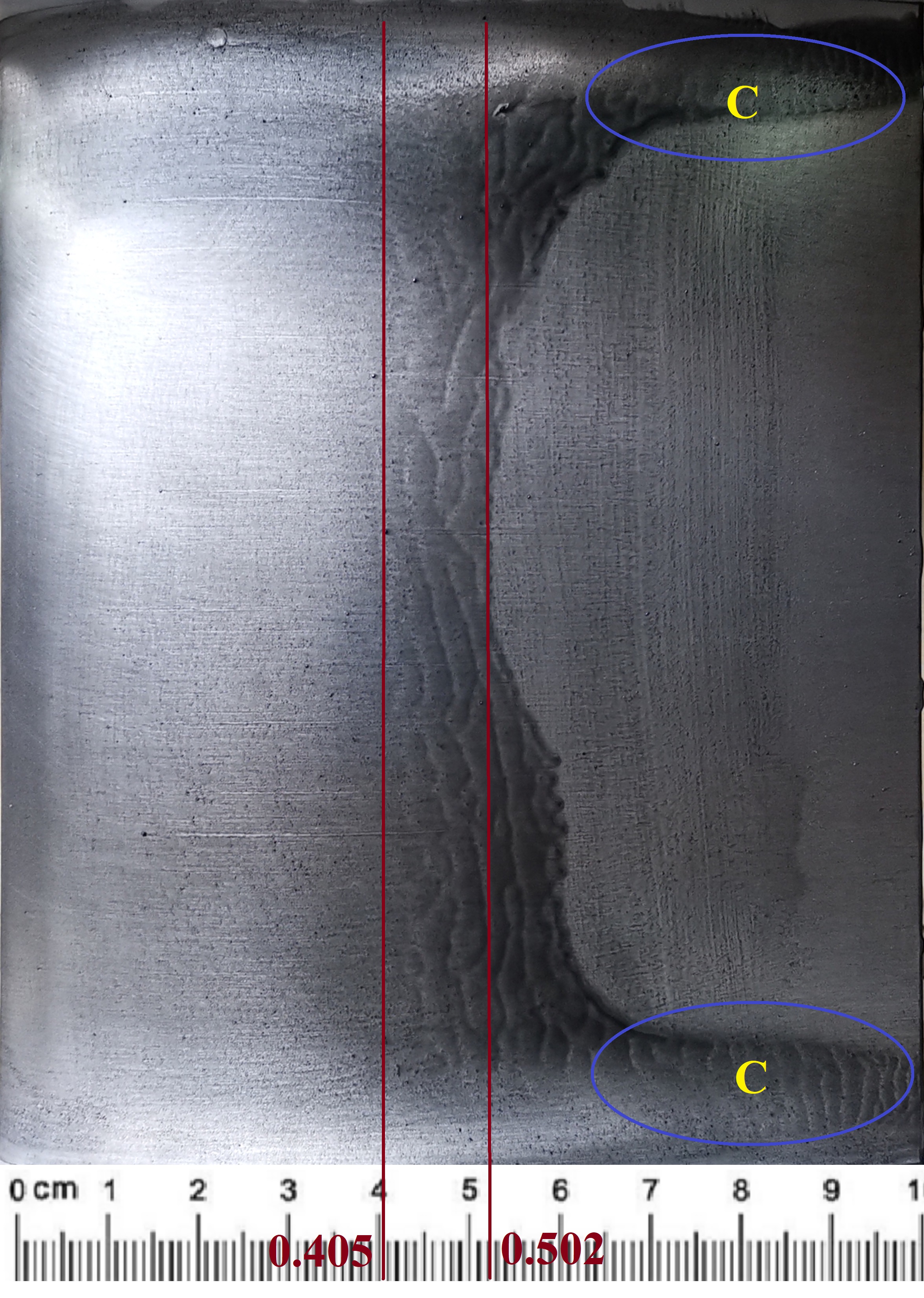}
\caption{\label{p7}Schematic diagram of the oil flow experiment result of  airfoil NACA2412 with  angle of attack  $-2^{\circ}$.}
\end{figure}

\begin{figure}[ht]
\includegraphics[scale=0.32,angle=0]{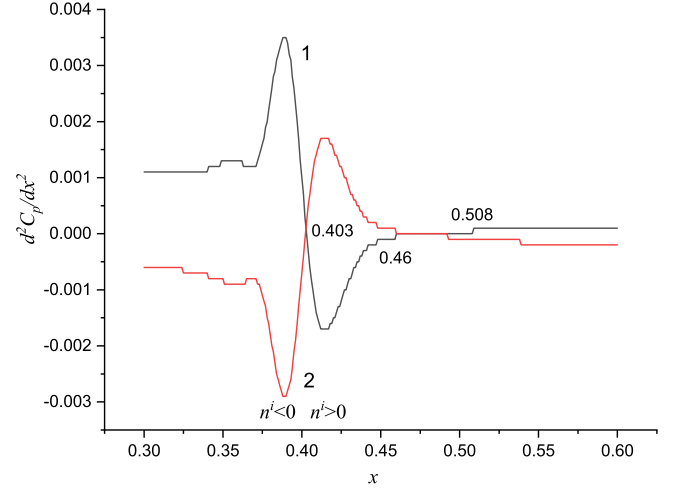}
\caption{\label{p8}The second derivative curve of pressure of airfoil NACA2412 with  angle of attack $-2^{\circ}$$, x \in \left[0.3, 0.6\right]$ (1 is the upper surface of the airfoil and 2 is the lower surface.)}
\end{figure}

Take the values from Fig.~\ref{p8}, the prediction accuracy could be calculated as
$$\Delta_1=\frac{|0.403-0.405|}{0.405}=0.49\%;$$
$$\Delta_2=\frac{|0.508-0.502|}{0.502}=1.19\%.$$

For the second case, the angle of attack is $12^{\circ}$. The corresponding product is not insignificant, constructed function from $\chi$.

When the above equation is intersected with the pressure quadratic curve, the result solved presents the separation point. The pressure curve and the second derivative curve are shown in the Figs.~{\ref{p9} and \ref{p10}}. The experimental data are shown in Fig~.\ref{p11}. The transition position is $x=0.0071$ and the separation position is $x=0.755$. Region E shows the separation bubble $x \in \left(0.5, 2.5\right) \cup \left(2.5, 6\right)$; region D is an error due to the viscosity of the fixed wall, which is equivalent to an infinite straight wing without wall for the airfoil used in the prediction. This error should be ignored as well.

\begin{figure}[ht]
\includegraphics[scale=0.32,angle=0]{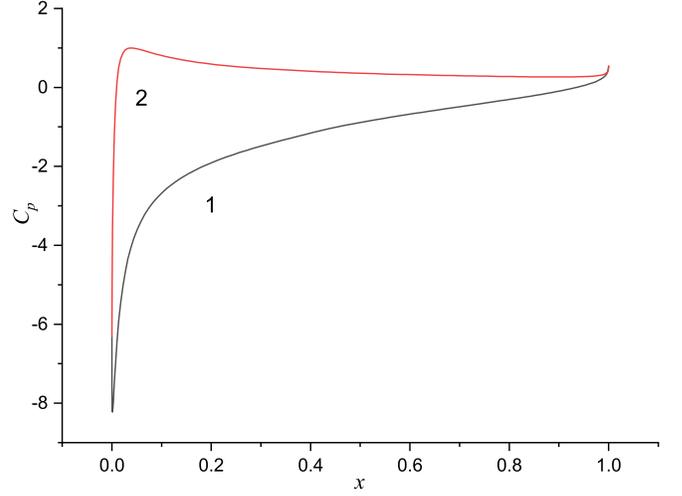}
\caption{\label{p9}Pressure curve of NACA2412 with attack angle of $12^{\circ}$, where 1 is the upper surface of the airfoil and 2 is the lower surface of the airfoil.}
\end{figure}

\begin{figure}[ht]
\includegraphics[scale=0.32,angle=0]{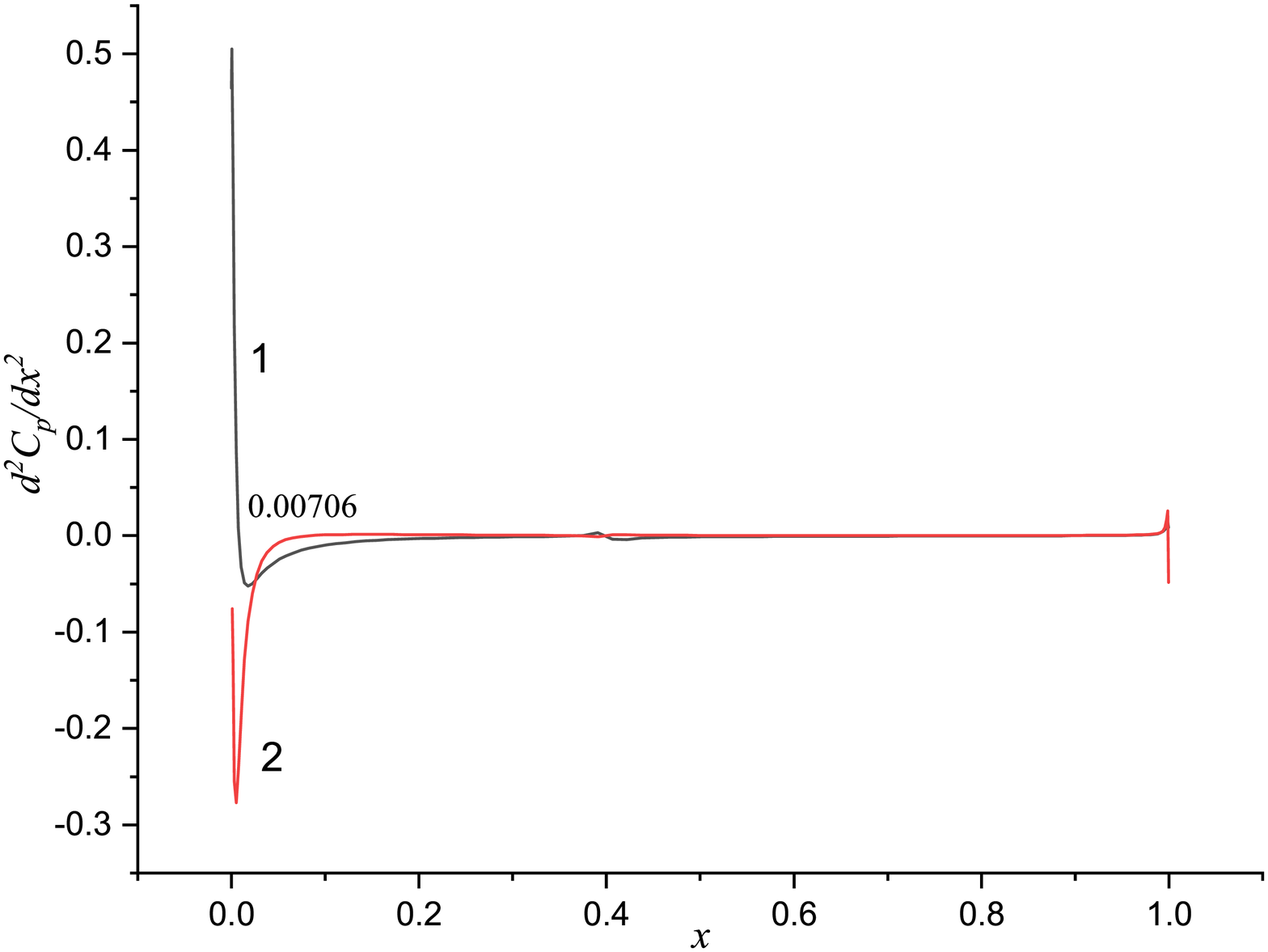}
\caption{\label{p10}Second derivative curve of pressure of NACA2412 with attack angle of $12^{\circ}$, where 1 is the upper surface of the airfoil and 2 is the lower surface of the airfoil.}
\end{figure}

Observe curve 2 (lower surface) in Fig.~\ref{p10}. When the normal vector component $n^i$ is less than $0$, the second derivative of the pressure curve of lower surface is less than $0$ before the second derivative value equals $0$; when it is greater than $0$, the second derivative of pressure with in most areas of curve 2 is greater than $0$ except the point near $x=0.4$, where the transition can be occured. Observe curve 1 (upper surface) in Fig.~\ref{p10}. When the normal vector component $n^i$ is less than $0$, there is  one point $x=0.00709$, where the second derivative is equal to $0$ and before it, the second derivative is greater than $0$. Transition occurs on the upper surface. When it is greater than $0$, the intersection point of pressure second derivative curve and function $\chi\left(x\right)$ is $x=0.762$ (Fig.~\ref{12}) and the second derivative of the pressure on upper surface is less than $0$. The separation occurs at this point. Also, it could be seen from Fig.~\ref{12} that there are two intersections between curve $\chi\left(x\right)$ and second derivative curve of pressure with $x = 0.2879$ and $x=0.4005$. Since a closed set is formed here, separation bubbles are easy to be formed here. The prediction accuracy can be calculated as
$$\Delta_1=\frac{|0.0071-0.00709|}{0.71}=1.4\%;$$
$$\Delta_2=\frac{|0.755-0.762|}{0.755}=0.927\%.$$ 

\begin{figure}[ht]
\includegraphics[scale=0.47,angle=0]{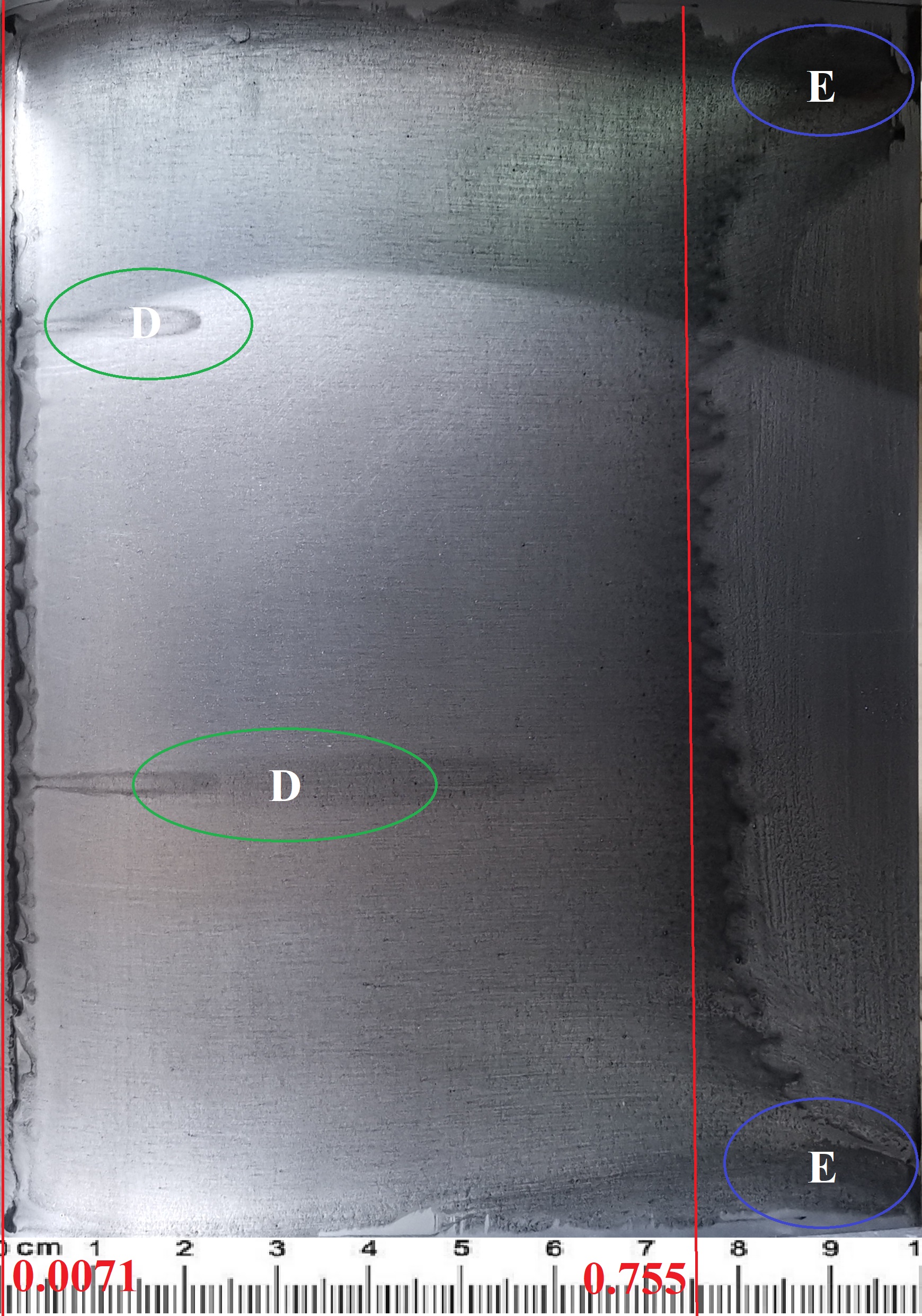}
\caption{\label{p11}Schematic diagram of the oil flow experiment result of airfiol NACA2412 with  angle of attack $12^{\circ}$.}
\end{figure}

\begin{figure}[ht]
\includegraphics[scale=0.31,angle=0]{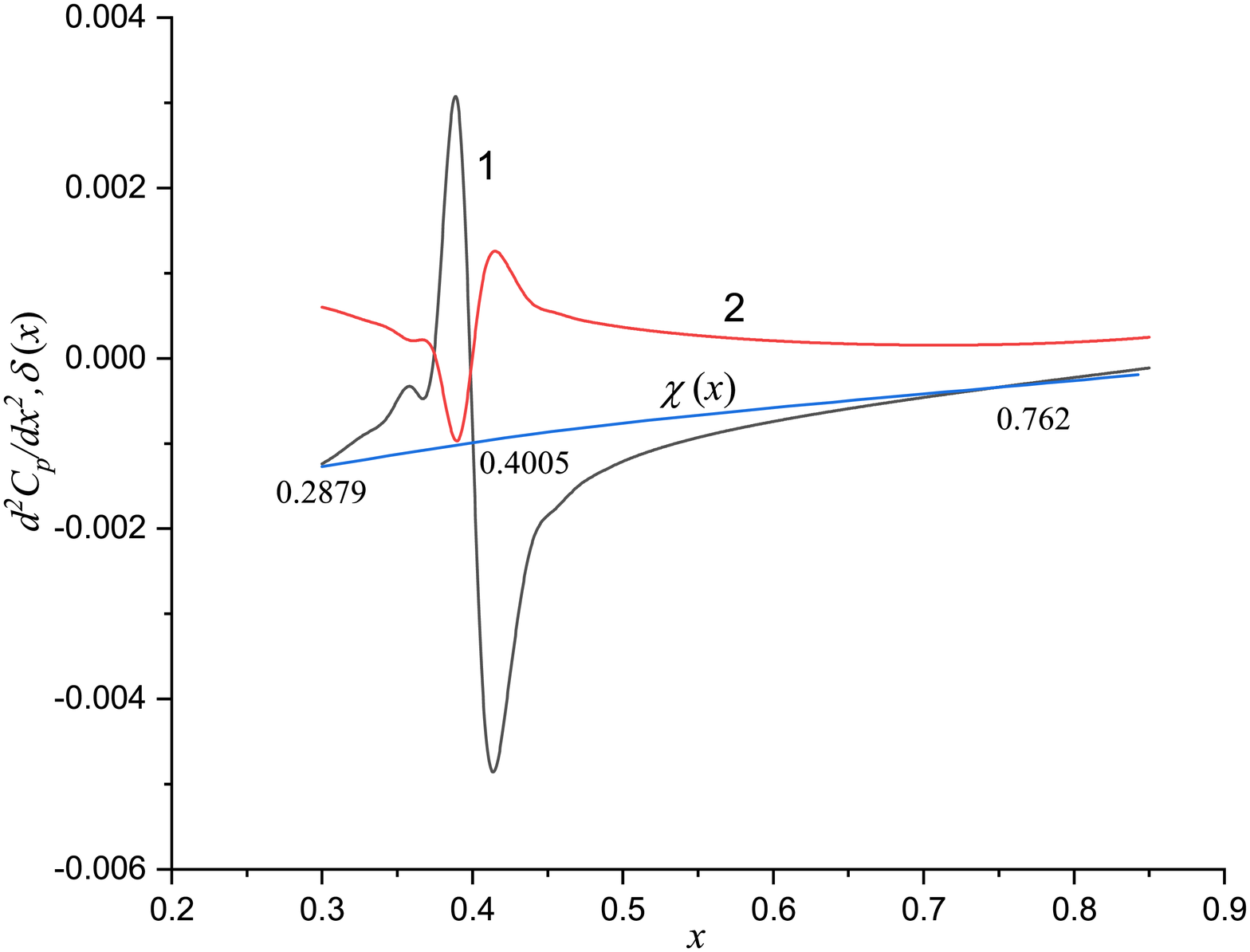}
\caption{\label{p12}The second derivative curve of pressure of airfoil NACA2412 with attack angle of $12^{\circ}$, $x \in \left[0.3, 0.85\right]$ (1 is the upper surface of the airfoil and 2 is the lower surface.)}
\end{figure}

Because the measurement equipment employed in the experiment has a certain measurement error, the prediction accuracy of the theory should be within $2\%$ under conservative estimation. The main reason for the decline of prediction accuracy of separated bubbles is the different research dimensions. The theoretical prediction study is one-dimensional, the experimental study is three-dimensional, and the separation bubbles have obvious three-dimensional characteristics. Therefore, if we can carry out three-dimensional experimental analysis and investigations on more accurate pressure parameters, more accurate verification accuracy will be obtained. Generally speaking, the correctness of the theory, at least in the low speed region and subsonic region, has been verified.

\section{\label{sec:level9}Conclusions}

In engineering, the determination of the separation and transition positions  affects the accuracy of test results, the feasibility of design results, and the applicability of test standards directly or indirectly.  These problems also limit in the development and improvement of active or passive flow control technologies. In order to solve them effectively, in this work, via the employment of rational mechanics and modern applied mathematics methods, with the aid of superposition state idea of quantum mechanics, the method of solving the variations of flowfield by using the excited state methodology of fluid mechanics is proposed, and the fundamental tensor equation of excited state is obtained. In order to further establish the basic relationship between separation and transition, the concept of degenerate form is proposed; moreover, by discretizing time and space, the excited degenerate equation is obtained, and the basic theories of transition and separation are proposed. Compared with the experimental results, the accuracy of the conservative estimation is less than $2\%$, which meets the engineering requirements.\par
For the theories of fluid mechanics , this work obtains the characteristic pressure. When the pressure gradient does not change, the characteristic pressure is the direct cause of the variations of flowfield. For characteristic pressure, from the perspective of fluid physics, the fluid viscosity is the fundamental cause of the changes of flowfield. In supersonic flow, velocity expansion is the main factor to change the flowfield characteristics. In addition, the fluctuation force caused by fluctuation pressure is one new physical variable, and its experimental methods and essential significances could not be clarified at present. This variable will also cause the flowfield enters the excited state under the effects of viscosity. Because the excited state covers a wide range of flowfield variation properties, including separation, transition, shock wave, reattachment, and cavitation, it is impossible to determine whether this physical variable contains other implied characteristics or not.\par
Theoretically speaking, although the prediction error of the degradation condition is $0$, there is a great uncertainty in the infinitesimal and infinital domains. Therefore, as a theoretical study, it could be considered that the degradation condition is equivalent to the fundamental theories of excited state fluid mechanics. But, as an application  in engineering practice, it is suggested to solve the optimization model or degradation condition through a large number of engineering practice by adopt the method in order to overcome the error caused by a large number of uncertainties. Especially, when the fiuld mechanics problems under various extreme conditions are studied, it should pay more attentions to analyze the applicable conditions of the theories.\par
This is a new research direction, in the future, it can help other engineering disciplines to carry out a lot of optimization investigations, including pressure distribution under complex geometric boundary, new flow control methods, innovative design methods of fluid machinery, and rational mechanical analysis methods of various general and extreme flowfield changes under excited state theories.\par

\section*{AUTHORS' CONTRIBUTIONS}
P.Y. conceived the present research idea, and proposed and developed the mathematical model and analysis method; J.X. designed the experiment to verify the mathematical model and analysis method; K.X. analyzed and verified the mathematical model; P.Y., J.X., and K. X. drafted the manuscript; M.L. and F.J. constructed the experimental setup and performed the experiments; Y.L. and D.P. checked and proofread the manuscript; P.Y. and J.X. approved the final manuscript for submission.

\begin{acknowledgments}
The authors wish to express their thanks to all the editors for their hand work and assistance. Thanks are also to the University of Electronic Science and Technology of China and the China Aerodynamics Research and Development Center for providing excellent conditions for this project. At the same time, we are grateful for the financial support of the Sichuan Province Expert Service Center (Grant No. M162019LXHGKJHD18) and the Fundamental Research Funds for the Central Universities (Grant No. A03019023801181) in the course of the project development.
\end{acknowledgments}

\section*{DATA AVAILABILITY}
The data that support the findings of this study are available from the corresponding author upon reasonable request.

\nocite{*}
\bibliography{main}

\end{document}